

\documentclass[showpacs, amsmath, amssymb, amsthm, aps, jmp, thmsb, twocolumn, superscriptaddress, pra]{revtex4-2}

\usepackage[dvipdfmx]{graphicx}
\usepackage{color}
\usepackage{float}
\usepackage{amsmath}
\usepackage{amssymb}
\usepackage{amsfonts}
\usepackage{amsbsy}
\usepackage{latexsym}
\usepackage{mathrsfs}
\usepackage{bbm}
\usepackage{bm}


\usepackage{booktabs}
\usepackage{multirow}

\newtheorem{definition}{Definition}

\newtheorem{theorem}{Theorem}

\newenvironment{proof}[1][Proof]{\textbf{#1.} }{\ \rule{0.5em}{0.5em}}
\newcommand{\be}{\begin{eqnarray}}
\newcommand{\ee}{\end{eqnarray}}
\def\({\left(}
\def\){\right)}
\def\[{\left[}
\def\]{\right]}
\def\C{\mathbb{C}}
\def\R{\mathbb{R}}
\def\Z{\mathbb{Z}}
\def\N{\mathbb{N}}

\newcommand{\bra}[1]{\langle #1 |}
\newcommand{\ket}[1]{| #1 \rangle}

\newcommand{\sla}[1]{\rlap{\kern .15em /}#1}




\begin{document}
\title{Indistinguishability between quantum randomness and pseudo-randomness\\
under efficiently calculable randomness measures}

\author{Toyohiro Tsurumaru}
\email{Tsurumaru.Toyohiro@da.MitsubishiElectric.co.jp}
\affiliation{Mitsubishi Electric Corporation, Information Technology R\&D Center, Kanagawa, 247-8501, Japan.
}

\author{Tsubasa Ichikawa}
\email{ichikawa.tsubasa.qiqb@osaka-u.ac.jp}
\affiliation{Center for Quantum Information and Quantum Biology (QIQB), Osaka University, Osaka 560-0043, Japan.
}

\author{Yosuke Takubo}
\email{yosuke.takubo@kek.jp}
\affiliation{Institute of Particle and Nuclear Studies, High Energy Accelerator Research Organization (KEK), Ibaraki 305-0801, Japan.
}
\affiliation{The Graduate University for Advanced Studies (SOKENDAI), Hayama 240-0193, Japan.
}

\author{Toshihiko Sasaki}
\email{sasaki@qi.t.u-tokyo.ac.jp}
\affiliation{Photon Science Center, Graduate School of Engineering, The University of Tokyo, Tokyo 113-8656, Japan.
}

\author{Jaeha Lee}
\email{lee@iis.u-tokyo.ac.jp}
\affiliation{Institute of Industrial Science, The University of Tokyo, Chiba 277-8574, Japan
}

\author{Izumi Tsutsui}
\email{izumi.tsutsui@kek.jp}
\affiliation{Institute of Particle and Nuclear Studies, High Energy Accelerator Research Organization (KEK), Ibaraki 305-0801, Japan.
}

\begin{abstract}
We present a no-go theorem for the distinguishability between quantum random numbers ({\it i.e.}, random numbers generated quantum mechanically) and pseudo-random numbers ({\it i.e.}, random numbers generated algorithmically).
The theorem states that one cannot distinguish these two types of random numbers if the quantum random numbers are efficiently classically simulatable and the randomness measure used for the distinction is efficiently computable.
We derive this theorem by using the properties of cryptographic pseudo-random number generators, which are believed to exist in the field of cryptography.
Our theorem is found to be consistent with the analyses on the actual data of quantum random numbers generated by the IBM Quantum and also those obtained in the Innsbruck experiment for the Bell test, where 
the degrees of randomness of these two set of quantum random numbers turn out to be essentially indistinguishable from those of the corresponding pseudo-random numbers.   
Previous observations on the algorithmic randomness of quantum random numbers are also discussed and reinterpreted in terms of our theorems and data analyses.
\end{abstract}

\maketitle

\section{Introduction}

In quantum mechanics, it is widely believed that transitions between two states occur randomly, and that this randomness is genuine in the sense that it does not admit causal descriptions by means of determinants called hidden variables.
This belief has been supported by numerous experiments (Bell tests) \cite{PhysRevLett.28.938,Aspect:1982,Tittel:1998,PhysRevLett.81.5039,Rowe:2001,Sakai:2006,Ansmann:2009,Giustina:2013,Giustina:2015,Hensen:2015,Shalm:2015,Dehollain:2016,Rosenfeld:2017,Rauch:2018} to examine the Bell inequality \cite{Bell:1964} which ended up in excluding any local hidden variable theories.  
Besides, no experiments so far signal even non-local hidden variable theories including Bohm\rq s one \cite{PhysRev.85.166,PhysRev.85.180}, 
suggesting the implausibility of causal descriptions underlying the quantum randomness.  

Meanwhile, the genuineness of quantum randomness has become one of the bases for applications of quantum physics, such as in quantum cryptography \cite{Pirandola:20} and random number generation \cite{RevModPhys.89.015004}.
It is also considered crucial in quantum mechanics to uphold the no-cloning theorem and ensure the consistency between quantum non-locality and special relativity \cite{10.5555/2666124}.

In view of the firmly established foundational trait and the ongoing successful applications, it is somewhat surprising that we still have little  understanding on how different the quantum randomness is with respect to non-quantum ones.   
Recently, we have seen several attempts \cite{PhysRevA.82.022102,Solis_2015,e20110886,PhysRevA.98.042131,Abbott_2019} designed to characterize the quantum randomness quantitatively
using some randomness measures based on 
the theory of algorithmic complexity \cite{Cover2006} and/or the statistical homogeneity of distribution.  
The algorithmic complexity quantifies the upper bounds of the memory size to generate the bit string with a universal Turing machine, whereas the statistical homogeneity evaluates relative frequency of the binary values of the bit string.  

Among the randomness measures mentioned in these attempts, the Lempel-Ziv (LZ) complexity and the Borel normality appear particularly handy and commonly used.  Here,  
the LZ complexity \cite{1055501}, which has been utilized to evaluate the complexity of non-linear dynamical systems in physics \cite{PhysRevA.36.842,10.1063/1.4808251}, estimates the difficulty of compressing the given bit string with  the LZ76 algorithm  \cite{1055501}.  On the other hand, the Borel normality \cite{calude2002}, which has been employed to compare the randomness of quantum systems with those of the other systems \cite{PhysRevA.82.022102,Solis_2015,e20110886,PhysRevA.98.042131,Abbott_2019}, 
quantifies the difference between the distribution of substrings in the given bit string and the uniform distribution thereof.

With these two measures, previous data analyses led us to two intriguing observations:
\begin{itemize}
\item[i)] The bit strings generated from a quantum random number generator (QRNG) have statistically larger values of the Borel normality, which implies that the generated bit strings are less random than those generated from non-quantum pseudo-random number generators (PRNGs) such as the Mersenne twister \cite{Abbott_2019}, despite the fact that by definition the latter is actually less random than the former (see Section \ref{sec:PRNG} for detail). This has been argued to be caused by the effects of unknown biases coming from experimental imperfections for the implementation of the QRNG.
\item[ii)] The bit strings generated from the Bell tests (another type of QRNGs) in \cite{PhysRevLett.81.5039,weihs_gregor_1998_7185335} appear to have smaller values of the LZ complexity when the corresponding Bell tests violate the Bell inequality more strongly \cite{PhysRevA.98.042131}, which again implies that the generated bit strings are less random when they are generated non-locally.  In other words, the degree of algorithmic randomness
appears to be anti-correlates with the degree of non-locality.
\end{itemize}

The above observations are clearly in conflict with our belief on the genuineness of quantum randomness and allure us to doubt the validity of the randomness measure employed, or even to question the implicit assumption on the very existence of measures capable of  distinguishing QRNGs from PRNGs, possibly under the influence of experimental imperfections.

In fact, it is not difficult to infer that the above assumption may be untenable.
For example, suppose that one is given equal-weight superposition states of quantum two-level systems (qubits).
By repeating ideal projection measurements, one will obtain a series of genuine random outcomes of binary numbers and, hence, a qubit may be regarded as the  \lq true random number generator\rq~(TRNG)  (our more technical definition of the TRNG is given in Sec.~\ref{theory}).
On the other hand, PRNGs for commercial use pass the randomness tests such as NIST suites, and are believed to imitate the TRNG with reasonable precisions, 
as far as the bit strings are finite and randomness tests are implementable in finite durations.
This seems to contradict the fact that by definition PRNGs are less random than the TRNG.
This indicates that, in practice, PRNGs will be able to mimic QRNGs at least for an ideal case.

On the basis of the above consideration, in this paper we present a no-go theorem on the above assumption.
We derive this theorem by using the properties of cryptographic PRNGs (CPRNGs; see, e.g., Refs.\cite{books/crc/KatzLindell2020,goldreich_2001}), which are believed to exist in the field of cryptography.
To be more explicit, 
we shall show that, if there exists a CPRNG, whose outcomes are indistinguishable from those of the TRNG (under any polynomial time algorithm within a negligible error margin), then any efficiently computable randomness measures cannot distinguish the outcomes of systems which are \lq efficiently classically simulatable\rq\ \cite{doi:10.1098/rspa.2002.1097} (which do not necessarily behave as the TRNG) from those of the PRNGs with a negligible error margin.
Since some quantum systems are known to be efficiently classically simulatable, our no-go theorem is applicable to those systems.  We also provide examples where the counter-intuitive observations i), ii) can be understood more easily as a consequence of the PRNGs capable of mimicking the QRNGs.  This illustrates that the observations i), ii) are actually consistent with our theorem.

In concrete terms, for observation i) we find that, if the length of the bit string and the relative frequency of bit values are comparable between the dataset generated by IBM Quantum and those generated by PRNGs, then the distributions of the LZ complexity and Borel normality are almost identical between IBMQ and PRNGs.
The difference between them reported in \cite{Abbott_2019} can be attributed to the fact that the relative frequencies were not aligned properly.

For observation ii), we first point out flaws in the data analyses in \cite{PhysRevA.98.042131} (see the last two paragraphs of Section \ref{sec:LZ_complexity}).
 Motivated by this, we reanalyze the dataset obtained in the Innsbruck experiment for the Bell test \cite{weihs_gregor_1998_7185335} and compare the results with those obtained from PRNGs.
As with the quantum coin tosses, this analysis also shows that the distributions of the LZ complexity and Borel normality are comparable in the actual dataset and PRNGs.
In addition, we reject the hypothesis of the existence of an anti-correlation between the degree of randomness and that of the violation of the Bell inequality.
We shall also exhibit a correlation between the LZ complexity and the binary entropy of the relative frequency, which indicates that systematic errors latent in the data acquisition process influence the degree of randomness.

This paper is organized as follows:
In Sec.~\ref{theory}, we introduce notions relevant to our theoretical analyses and present our main results including the no-go theorem.
In Sec.~\ref{supports}, we make the empirical analyses mentioned above to support our results and argue their implications and interpretations with respect to preceding works.
Section~\ref{conc} is devoted to our conclusion.

\section{Our notion of randomness and the main results}
\label{theory}

As announced in Introduction, we show the indistinguishability of the outcomes of the QRNGs and PRNGs with the use of any efficiently calculable randomness measures.
Proof of this assertion requires several theoretical notions originating from cryptography, but possibly unusual in quantum information science.
We hereafter make a review of these important notions (with slight generalizations if necessary), rigorously state our assertion, and give proof thereof.

We begin by summarizing the terminology. 
A quantum random number generator (QRNG) is literally a quantum system that outputs a random value.
A pseudorandom number generator (PRNG), on the other hand, is a deterministic computation algorithm which transforms a short random bit sequence into a seemingly random long bit sequence (see Section \ref{sec:PRNG} for detail).

References \cite{1055501,calude2002} discuss indices $I$ which can be computed from a random bit sequence $X$ and measure the randomness of $X$.
Hereafter such index $I(X)$ will also be called a {\it randomness measure}.
The references above also suggest the possibility that certain types of $I(X)$, such as the LZ complexity (see Section \ref{sec:LZ_complexity}) and the Borel normality (see Section \ref{sec:Borel_normality}), can be used to distinguish PRNGs and QRNGs.

Our main message in this paper is that such an index $I$ is not in fact feasible, in light of the knowledge of modern cryptography.
The basic flow of discussion is as follows.

While references \cite{1055501, calude2002} placed no particular constraints on $I$, in the real world, one needs at least the following two conditions, for $I$ to be able to distinguish PRNGs and QRNGs,
\begin{enumerate}
\item There is a function $D_I:I\to\{0,1\}$ that maps index $I$ to one bit.
The distribution of $D_I(I(X))$ differs depending on whether the random number $X$ is output from QRNG or PRNG.
\item Indices $I, D_I$ can be calculated in a realistic time.
\end{enumerate}
(See Section \ref{sec:minimal_requirements_I}, Definitions \ref{def:randomness_measure} and \ref{def:necessary_condition_on_I} for details.)
However, if any randomness measure $I$ can satisfy these conditions, it immediately leads to a contradiction with the common and widely used assumption of cryptography, namely, the existence of cryptographic PRNGs.
In other words, as long as we accept the common assumption of cryptography, we can show that there is no good randomness measure $I$ that can distinguish between QRNG and PRNG.

The meanings of conditions 1 and 2 above are as follows.

First, as a major premise, if an index $I$ can actually distinguish QRNGs and PRNGs, its value $I(X)$ must have different probability distributions depending on whether $X$ is from a QRNG or a PRNG.
Or equivalently, if $I(X)$ has exactly the same distribution for the two sources, then $I$ will never be able to distinguish them.

Under this premise, condition 1 above demands that there exists a rule $D_I$ that translates such difference in distributions of $I(X)$ to one bit, and that the one bit can be used to classify the sources into two groups.
Ideally, $D_I$ should be strong enough to discriminate QRNGs and PRNGs without error, but that is not necessarily required here.
It here suffices that the distribution of $D_I(I(X))$ exhibits a slight difference depending on whether $X$ is from a QRNG or a PRNG (see Section \ref{sec:distinguisher} for details).

Condition 2 demands that the computation necessary for such classification can be carried out within a realistic time.
For example, if the calculation of a given type of $I$ or $D_I$ takes an infinite, or a finite but unrealistically long time (e.g., longer than the age of the universe even using all the computers on earth), then such $I$ or $D_I$ are considered incalculable in practice, and thus excluded them from consideration.

As to the rigorous notion of `can be calculated in a realistic time,' we follow the tradition of computer science and define it as `computable by an algorithm (Turing machine) in time which is a polynomial of the input length' \cite{sipser13}.
Throughout the paper, we will often call a polynomial-time algorithm an {\it efficient} algorithm.
Also, unless otherwise stated, whenever we say simply an `algorithm,' it refers to an efficient algorithm.

\subsection{Facts on random number generators (RNGs)}

\subsubsection{True random number generator (TRNG)}
\label{sec:TRNG}

Throughout the paper, $P(E)$ denotes the probability of an event $E$.

For each $n\in \N$, we denote by $U^n$ the random variable of the uniformly random $n$ bits; i.e. $U^n\in\{0,1\}^n$, and $P(U_n=x)=2^{-n}$ for all $x\in\{0,1\}^n$.
This random variable $U^n$ achieves the maximum Shannon entropy possible for each $n$,
\begin{equation}
H(U^n)=n
\label{eq:TRNG_entropy_maximum}
\end{equation}
(where  $H(X)$ denotes the Shannon entropy of a random variable $X$; $H(X)=-\sum_{x}p(x)\log_2 p(x)$, $p(x)=P(X=x)$. See, e.g., Ref. \cite{Cover2006}).
For this reason, we will call $U_n$ the true random number generators (TRNGs).

\subsubsection{The TRNGs are a special case of QRNGs}

The TRNGs are a special case of QRNGs.
That is, the TRNGs can be realized in a quantum system in principle.

The case of $n=1$, $U^1$, can be realized because there exists a quantum system $A$ that outputs bit values 0,1 exactly with probability 1/2, which realizes.
For example, let $A$ be a quantum system where one (i) generates one of the $Z$ basis states $\{\ket{0},\ket{1}\}$, and (ii) performs a projective measurement on it using the $X$ basis $\{\ket{\tilde{0}},\ket{\tilde{1}}\}$, with $\ket{\tilde{0}}:=(\ket{0}+(-1)^b\ket{1})/\sqrt2$.
Also, if we let $A^n$ denote an $n$ repetition of a system $A$, then $A^n$ realize $U^n$ for an arbitrary $n\in\N$ (Fig. \ref{fig:randomness1}(a)).

At this point, one might argue that in a real physical system there is always a limitation of experimental accuracy, and it might not be possible to achieve probability 1/2 strictly. 
Even if it were possible, one may not be able to repeat an arbitrary number of trials using exactly the same system, due to limitations in costs and materials.

Still, it should be noted that, in theory, one cannot deny the existence of such $A$ as well as its iterations $A^n$.
That is, no rationale can exclude the TRNG from QRNGs.
Hence, for example, if one claims that an index $I$ can distinguish QRNGs from PRNGs, one should also assume that it can distinguish the TRNG, a type of QRNG, from PRNGs.

\begin{figure}[t]
  \includegraphics[width=\linewidth]{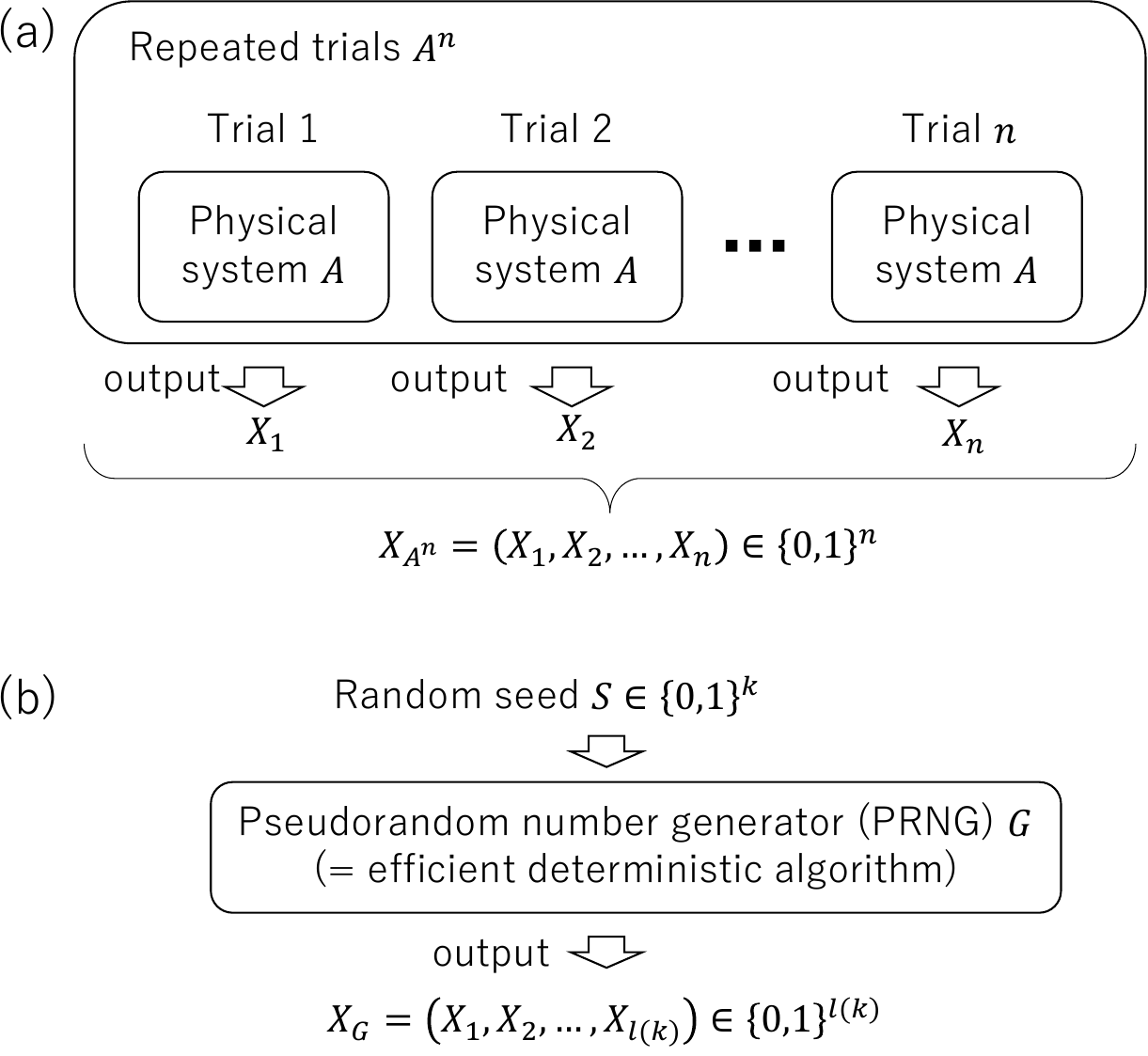}
  \caption{(a) Repeated trials $A^n$ of experiments using the same physical system $A$.
(b) Pseudorandom number generator (PRNG) $G$.
  }
  \label{fig:randomness1}
\end{figure}

\subsubsection{Pseudorandom number generator (PRNG)}
\label{sec:PRNG}

Informally, a pseudorandom number generator (PRNG) is
\begin{description}
\item[P1] A method for expanding a short random bit string $s$ (called a {\it randm seed}) into a long bit string $x$ 
\end{description}
(cf. Fig. \ref{fig:randomness1}(b) and Definition \ref{def:PRNG}), where
\begin{description}
\item[P2] The generated string $x$ is in some sense similar to a true random bits string (i.e. an output of the TRNG).
\end{description}

Condition P1 means that, once one feeds a short random seed $s$ to a PRNG, one can obtain a virtually inexhaustible number of `random' bits $x$.
PRNGs are useful in this sense in practive, but the drawback is that the resulting $x$ is much less random than TRNG.

Condition P2 means that for certain usages the small randomness of $x$ may not be apparent, and $x$ can be a substitute for the TRNG.
For example, in numerical calculations, it is observed that using PRNG instead of TRNG does not significantly change the results.

Condition P1 is stated formally as follows.
\begin{definition}[PRNG, cf. Fig. \ref{fig:randomness1}(b)]
\label{def:PRNG}
Pseudorandom number generator (PRNG) is a (efficient) deterministic algorithm such that for any $k\in\N$ and any input $s\in\{0,1\}^k$, the result $G(s)$ is a string of length $l(k)$, where $l(k)$ is a polynomial satisfying $l(k)>k$ for $\forall k\in\N$.
The polynomial $l(k)$ is called an expansion factor.
\end{definition}

Here a deterministic algorithm $G$ is the one in which, once the input value $s$ is determined, the output value $x$ is also uniquely determined.

Since the random seed $S$ is a $k$-bit long random variable, its entropy is at most $k$ bits.
Also, since the algorithm $G$ deterministic, the entropy of its output $X=G(S)$ can at most be that of $S$.
Thus, we have
\begin{equation}
H(G(S))\le k<l(k),
\end{equation}
meaning that, unlike in the case of the TRNG, the entropy of PRNG's output $X=G(S)$ can never attain the theoretical maximum $l(k)$ (cf. Eq. (\ref{eq:TRNG_entropy_maximum})).
PRNGs are definitely inferior to the TRNGs in this sense.

\subsubsection{Cryptographic PRNG (CPRNG)}
\label{sec:CPRNG}

Next we proceed to the formal definition of condition P2.
Unlike condition P1, the definition of P2 differs depending on the academic field.
Here we particularly adopt the definition in the field of cryptography, or {\it cryptology}.

In cryptology, the phrase `in some sense similar' in condition P2 is taken pessimistically and interpreted as `distinguishable by any efficient discrimination algorithm.'
This is because rationale cryptanalysts will always employ the best attacking method possible.
In other words, no cryptanalysts will deliberately choose an inferior method when better methods are available.

From this point of view, condition P2 can formally be stated as follows.
\begin{definition}[CPRNG. Ref. \cite{books/crc/KatzLindell2020}, Def. 3.14]
\label{def:Cryptographic_PRNG}
We say that a PRNG $G$ is a cryptographic PRNG (CPRNG) if it satisies, in addition to the properties of Definition \ref{def:PRNG}, the following:
For any (efficient) probabilistic algorithm $D$, there is a negligible function ${\rm negl}$ such that
\begin{align}
\left|P(D(G(U^k))=1)-P(D(U^{l(k)})=1)\right|\le{\rm negl}(k).
\end{align}
\end{definition}
(Several remarks are in order for this definition.
First, probabilistic algorithms are those which perform coin tosses internally (see, e.g., \cite{sipser13, goldreich_2001}).
Second, a function $f:\N\to\R$ is called negligible if it satisfies: For any positive polynomial (polynomial whose output is positive for any input $\in\N$) $p(k)$, there exists $N\in\N$ such that
\begin{equation}
\forall k\ge N,\ f(k)<\frac{1}{p(k)};
\end{equation}
see, e.g., Ref. \cite{goldreich_2001}, Def. 1.3.5.
Finally, the distinguisher algorithm $D$ may vary depending on the PRNG algorithm $G$; in other words, this definition covers the situation where the cryptanalysts who try to distinguish between the PRNG and the TRNG {\it know} the PRNG algorithm.)

In fact, however, no explicit construction of CPRNG is known, which satisfies the properties of Definition \ref{def:Cryptographic_PRNG}. No proof of its existence has been given either.
Hence the existence of a CPRNG is a mere assumption, not a proven fact.

In such a situation, what is mostly done in cryptology is either (i) to mathematically prove the existence of a CPRNG from other assumptions, or (ii) to mathematically prove the security of a cryptosystem having $G$ as a component, assuming that $G$ is a CPRNG; see, e.g., Refs.\cite{books/crc/KatzLindell2020,goldreich_2001}.
As a result, most of the results obtained so far in cryptology do not hold without the existence of a CPRNG or other stronger assumptions, such as the existence of a one-way function.

It may be easier to understand this situation by considering CPRNG in cryptology as an analog of the second law in thermodynamics.
In thermodynamics, there are two situations similar to (i) and (ii) above, namely,
(i') if one accepts the second law as a hypothesis, one can prove useful results including the existence of entropy, and (ii') if one accepts the principles of statistical dynamics as hypotheses, one can explain to some extent (though one cannot prove) why the second law holds.

Hence, although the existence of a CPRNG is nothing more than a hypothesis, if any result contradicting with it were obtained, it would immediately mean the collapse of cryptology as a whole, and the impact would be enormous.

In terms of practicality, this can be stated as follows. Many standardized cryptographic schemes that we currently use specify a PRNG to be employed as their component. For example,  a cryptographic scheme called ChaCha20-Poly1305 employs a PRNG called ChaCha20 \cite{rfc8439, books/crc/KatzLindell2020}. 
When theoretically guaranteeing the security of these standardized cryptographic schemes, it is customary to simply assume (rather than prove) that the PRNG is a CPRNG (or a pseudorandom function, which can be utilized to construct a CPRNG; see Section 3.5.1, Ref. \cite{books/crc/KatzLindell2020}), and then to ``prove'' that the scheme using it is secure based on the assumption (see, e.g., Section 3.2, Ref. \cite{books/crc/KatzLindell2020}). Therefore, if the existence of a CPRNG were denied,  the basis for theoretically guaranteeing the security of standardized cryptographic schemes would be lost.

\subsection{Minimal requirements on randomness measures}
\label{sec:minimal_requirements_I}

References \cite{1055501,calude2002} discuss randomness measures for distinguishing QRNGs and PRNGs.
There, a randomness measure $I$ is an index $I(X)$ that can be calculated from the output bit string $X\in\{0,1\}^n$ of a RNG, and reflects the randomness of the RNG.
However, they did not necessarily discuss conditions for $I$ to be useful in practice.
Below  will discuss such conditions.

\subsubsection{Efficiently computable randomness measure}
Some indices $I$, such as LZ complexity (see Section \ref{sec:LZ_complexity}), take a discrete value, while others, such as Borel normality (see Section \ref{sec:Borel_normality}), take a continuous value.
Those indices $I$ of discrete value can of course be represented by a finite number of bits.
On the other hand, those of a continuous value cannot be calculated as it is in finite time, hence one needs to adapt them to a finite number of bits, e.g., by truncating the value to a certain number of significant digits.
As a result, any index $I$ needs to be of finite bit length.

Moreover, in order for the index $I$ to be calculable not only in finite time but also in a realistic time, we need an efficient algorithm that can compute it.
Hence we need the following condition on $I$.
\begin{definition}[Efficiently computable measure $I$]
\label{def:randomness_measure}
A randomness measure is an efficient probabilistic algorithm
$I:\{0,1\}^*\to\{0,1\}^*$.
\end{definition}
Here $\{0,1\}^*$ denotes the set of all bit strings of a finite length, $\{0,1\}^*=\bigcup_{n\in \Z_{\ge0}}\{0,1\}^n$ (see, e.g., Refs.\cite{books/crc/KatzLindell2020,goldreich_2001}).

Note that Definition \ref{def:randomness_measure} can accommodate the case where one handles multiple indices $I_1, I_2, \dots$ simultaneously:
Let $I$ be a concatenating $I_1, I_2, \dots$, for example.

\subsubsection{Efficient distinguisher for a randomness measure}
\label{sec:distinguisher}

Furthermore, for $I$ to be an effective randomness measure, it is necessary that `$I$ reflects the randomness of the RNG.'
In this paper we interpret this property as that `the value $I(X)$ has different probability distributions
depending on whether the source of $X$ is a TRNG or a PRNG.'
Indeed, if this is not the case, then $I(X)$ will have the same distribution irrespective of the randomness of $X$'s source.
It is apparent that such $I(X)$ conveys no information on the randomness of $X$.

In addition, in order for humans to be able to exploit such differences in probability distribution, a quantitative criterion is necessary.
That is, one needs an algorithm $D_I$ which determines from the value of $I(X)$ whether the randomness of $X$'s source is good or bad (0 or 1).
Also, in order for $D_I$ to be calculable in a realistic time, $D_I$ needs to be efficient.

In summary, for any effective randomness measure $I$, there should exist an algorithm $D_I$ of the following type.
\begin{definition}[Distinguisher $D_I$ for measure $I$]
\label{def:necessary_condition_on_I}
A distinguisher for a randomness measure $I$ is an efficient probabilistic algorithm
$D_I:\{0,1\}^* \to \{0,1\}$.
\end{definition}
Also, we say that $I$ is a good randomness measure if the distribution of $D_I(I(X))$ is significantly different depending on whether the source of $X$ is a QRNG or a PRNG.

Under these definitions, the ideal situation is, of course, where the algorithm $D_I\circ I$ can discriminate QRNGs and PRNGs perfectly without any error.
However, it should be noted that we do not necessarily require such ideal discrimination here.

For example, $I$ can be regarded a good randomness measure as long as the result $D_I(I(x))=1$ is reliable, even if $D_I(I(x))=0$ is not.
This corresponds to the following common situation in randomness tests (such as Ref. \cite{8966}):
If $x$ fails the test (i.e., if $D_I(I(x))=1$) one can safely conclude the source of $x$ is a bad one (in our case, a PRNG), but even if $x$ passes the test (i.e., if $D_I(I(x))=0$) it does not necessarily mean that the source was a good one (in our case, a QRNG).

\begin{figure}[t]
  \includegraphics[width=0.8\linewidth]{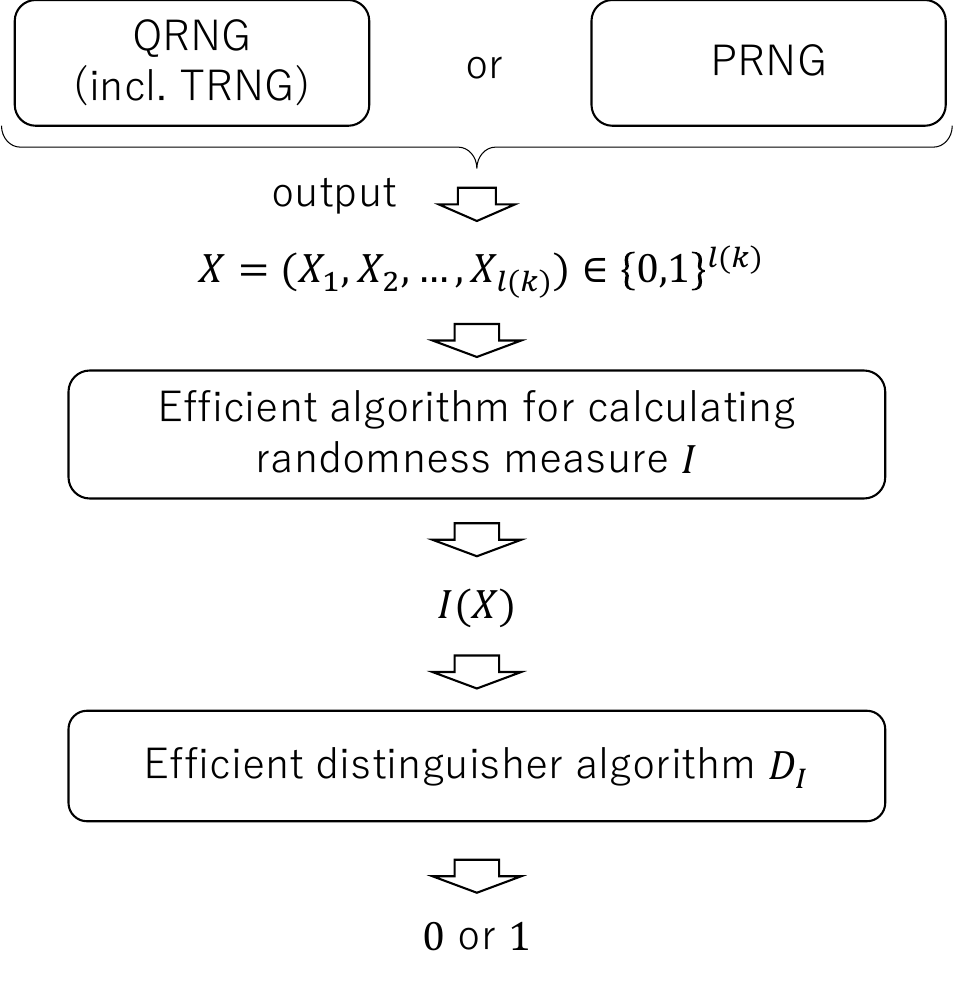}
  \caption{A setting for discriminating QRNGs (including the TRNG) from PRNGs using a randomness measure $I$ and a distinguisher $D_I$ corresponding to $I$.
  }
  \label{fig:randomness2}
\end{figure}

\subsection{Main result: A good randomness measure is not feasible}

In fact, the existence of a good randomness measure $I$ directly contradicts the common assumption of cryptology.
Hence a good randomness measure is not feasible.

More precisely, from what we have already discussed, one can conclude that
\begin{itemize}
\item[(i)] If a cryptographic PRNG (CPRNG), defined in Section \ref{sec:CPRNG}, exists, it is indistinguishable from the TRNG.
\item[(ii)] If there exists the algorithm $D_I\circ I$ that can distinguish QRNGs and PRNGs, it should also be able to distinguish the TRNG (a type of QRNG) and CPRNGs (a type of PRNGs).
\end{itemize}
However, these statements are clearly contradictory, thus one must give up either one.
That is, one needs to accept either of the followings statements,
\begin{itemize}
\item[(i')] A CPRNG does not exist.
\item[(ii')] No algorithm $D_I\circ I$ corresponding to any randomness measure $I$ can distinguish between the TRNG (a type of QRNG) from CPRNGs (a type of PRNGs).
\end{itemize}

Recall that, as mentioned in the second half of Section \ref{sec:CPRNG}, the existence of CPRNG is one of the most common assumptions in cryptology, on which most results of the field are based.
Thus, choosing (i') is equivalent to making a very challenging claim that denies much of cryptology.

Therefore, from a conservative standpoint, statement (ii') should be accepted.
The rigorous form of this statement is as follows.
\begin{theorem}
\label{thm:main}
Suppose that there exists a CPRNG $G$, then for any randomness measure $I$ and any distinguisher $D_I$ corresponding to it, we have
\begin{equation}
\left|P(D_I(I(G(U^k)))=1)-P(D_I(I(U^{l(k)}))=1)\right|\le{\rm negl}(k),
\label{eq:main_th_CPRNG}
\end{equation}
where $l(k)$ is the expanding factor of $G$
(Recall that $P(E)$ denotes the probability of an event $E$).
\end{theorem}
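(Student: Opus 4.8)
The plan is to recognize that the quantity on the left-hand side of Eq.~(\ref{eq:main_th_CPRNG}) is nothing but an instance of the CPRNG indistinguishability bound of Definition \ref{def:Cryptographic_PRNG}, specialized to one particular distinguisher: the one built by running the randomness measure $I$ and then feeding its output into $D_I$. The entire argument therefore reduces to exhibiting this composite as an admissible distinguisher, after which the theorem follows by a direct appeal to the defining property of $G$. First I would set $D:=D_I\circ I$, i.e. the algorithm that on input $x\in\{0,1\}^*$ computes $I(x)\in\{0,1\}^*$ and then outputs the bit $D_I(I(x))\in\{0,1\}$. With this definition one has $P(D(G(U^k))=1)=P(D_I(I(G(U^k)))=1)$ and $P(D(U^{l(k)})=1)=P(D_I(I(U^{l(k)}))=1)$, so that the quantity to be bounded is exactly $\left|P(D(G(U^k))=1)-P(D(U^{l(k)})=1)\right|$.

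The key step is to verify that $D$ meets the hypotheses required of a distinguisher in Definition \ref{def:Cryptographic_PRNG}, namely that it is an efficient probabilistic algorithm from $\{0,1\}^*$ to $\{0,1\}$. The codomain is immediate, since $D_I$ emits a single bit by Definition \ref{def:necessary_condition_on_I}. For efficiency I would argue as follows: by Definition \ref{def:randomness_measure} the algorithm $I$ runs in time bounded by some polynomial $p_1$, so on an input of length $n$ it halts within $p_1(n)$ steps and in particular produces an output of length at most $p_1(n)$; by Definition \ref{def:necessary_condition_on_I} the algorithm $D_I$ runs in time bounded by some polynomial $p_2$. Hence $D$ halts within $p_1(n)+p_2(p_1(n))$ steps, which is again polynomial in $n$ because a polynomial of a polynomial is a polynomial. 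Both $I$ and $D_I$ are probabilistic, and executing them in sequence with independent internal coin tosses again yields a probabilistic algorithm, so $D$ is an efficient probabilistic algorithm as required.

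Finally I would apply Definition \ref{def:Cryptographic_PRNG} to this $D$: since $G$ is a CPRNG, there exists a negligible function ${\rm negl}$ (depending on $D$, hence on the pair $I,D_I$) such that $\left|P(D(G(U^k))=1)-P(D(U^{l(k)})=1)\right|\le{\rm negl}(k)$, which upon substituting $D=D_I\circ I$ is precisely Eq.~(\ref{eq:main_th_CPRNG}). I do not expect a genuine obstacle here; the only point requiring care is the efficiency bookkeeping, i.e. confirming that composing two polynomial-time probabilistic algorithms stays within polynomial time. The conceptual weight of the theorem has already been absorbed into the earlier definitions, and the statement is in essence a restatement of the CPRNG property once one observes that the two-stage procedure ``apply the measure $I$, then threshold via $D_I$'' constitutes a single efficient distinguisher to which that property directly applies.
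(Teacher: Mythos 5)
Your proof is correct and follows exactly the paper's own argument: both observe that $D:=D_I\circ I$ is an efficient probabilistic distinguisher (the composition of two polynomial-time probabilistic algorithms being polynomial-time) and then apply Definition \ref{def:Cryptographic_PRNG} directly. Your explicit bookkeeping of the running time $p_1(n)+p_2(p_1(n))$ merely spells out what the paper leaves implicit.
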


\begin{proof}
This theorem follows trivially from Definition \ref{def:Cryptographic_PRNG} of CPRNG.
Indeed, for any given $I$ and $D_I$ satisfying Definitions \ref{def:randomness_measure} and \ref{def:necessary_condition_on_I}, the algorithm $D_I\circ I$ becomes an example of the distinguisher algorithms $D$ for $G$, defined in Definition \ref{def:Cryptographic_PRNG}.
Thus, we have (\ref{eq:main_th_CPRNG}).
\end{proof}

We note that a similar theorem as Theorem \ref{thm:main} holds by assuming the existence of a one-way function, not of a CPRNG.
This is because a CPRNG can be constructed from one-way functions (see, e.g., Ref. \cite{goldreich_2001}, Section 3.4).

\subsection{Generalization of Theorem \ref{thm:main}}
In the previous section, we assumed that QRNG $A$ outputs a bit 0,1 with exactly probability 1/2;
that is, we only consider the case where $A$ realizes the 1-bit TRNG $U^1$.
In addition, the expansion factor $l(k)$ of the CPRNG $G$ that mimics $A^{l(k)}$ could not be chosen freely.
Below we relax these restrictions and generalize Theorem \ref{thm:main}.

\subsubsection{A simple case (biased random bit generators)}
\label{sec:simple_case}

First, we outline the idea with a simple example.

Suppose that QRNG $A$ output 1-bit $X\in\{0,1\}$ as in the previous section, but the probability may be biased; i.e., $p=P(X=1)$ may not be 1/2.
In this case too, essentially the same result as Theorem \ref{thm:main} holds: If there exists a CPRNG $G$, then one can construct a PRNG $G'$ that simulates $A^{l(k)}$, with an arbitrary polynomial $l(k)$ being the expansion factor.
In addition, the outputs of $A^{l(k)}$ and $G'$ are indistinguishable by using any $I$ and $D_I$.

The rigorous proof of this claim will be given in the next section and in Appendix \ref{sec:proof_thm} after generalizing the setting further.
Below we temporarily give a proof sketch for the simple case above (cf. Fig. \ref{fig:biased_simulation}).

First note that $A$ can be efficiently simulated with accuracy $2^{-m}$ in probability, by using a deterministic algorithm $A'$ whose input is the $m$-bit long TRNG $U^m$.
For example, (i) approximate probability $p$ by a fraction $p=b/2^{m}$, $b\in\N$, and (ii) let $A$ output $1$ if the value of $U^m$ is less than or equal to $b$, and $0$ otherwise.

Next let $(A')^n$ denote the $n$ repetition of $A'$,
Clearly, $(A')^n$ is a deterministic algorithm that simulates $A^n$ by using the input of $nm$-bit TRNG $U^{mn}$.

If we then replace the input $U^{mn}$ of $(A')^n$ with the output of CPRNG $G(U^k)$ (having an expansion factor $l(k)=mn$), then we obtain the desired PRNG $G'$ that mimics $A^{n}$.

In this setting, the output of $G'$ is indeed indistinguishable from the output of $A^{l(k)}$, because:
(i) By choosing $m$ sufficiently large, the output $X_{(A')^n}$ of $(A')^n$ can approximate the output $X_{A^n}$ of $A^n$ with arbitrarily small error probability.
That is, $X_{A^n}$ and $X_{(A')^n}$ are indistinguishable.
(ii) The TRNG $U^{mn}$ and the CPRNG $G(U^k)$ are indistinguishable by the definition of CPRNG.
Thus, the outcome of an algorithm $(A')^n$ on their input, namely $X_{(A')^n}$ and $X_{G'}$, are also indistinguishable.

\begin{figure}[t]
  \includegraphics[width=\linewidth]{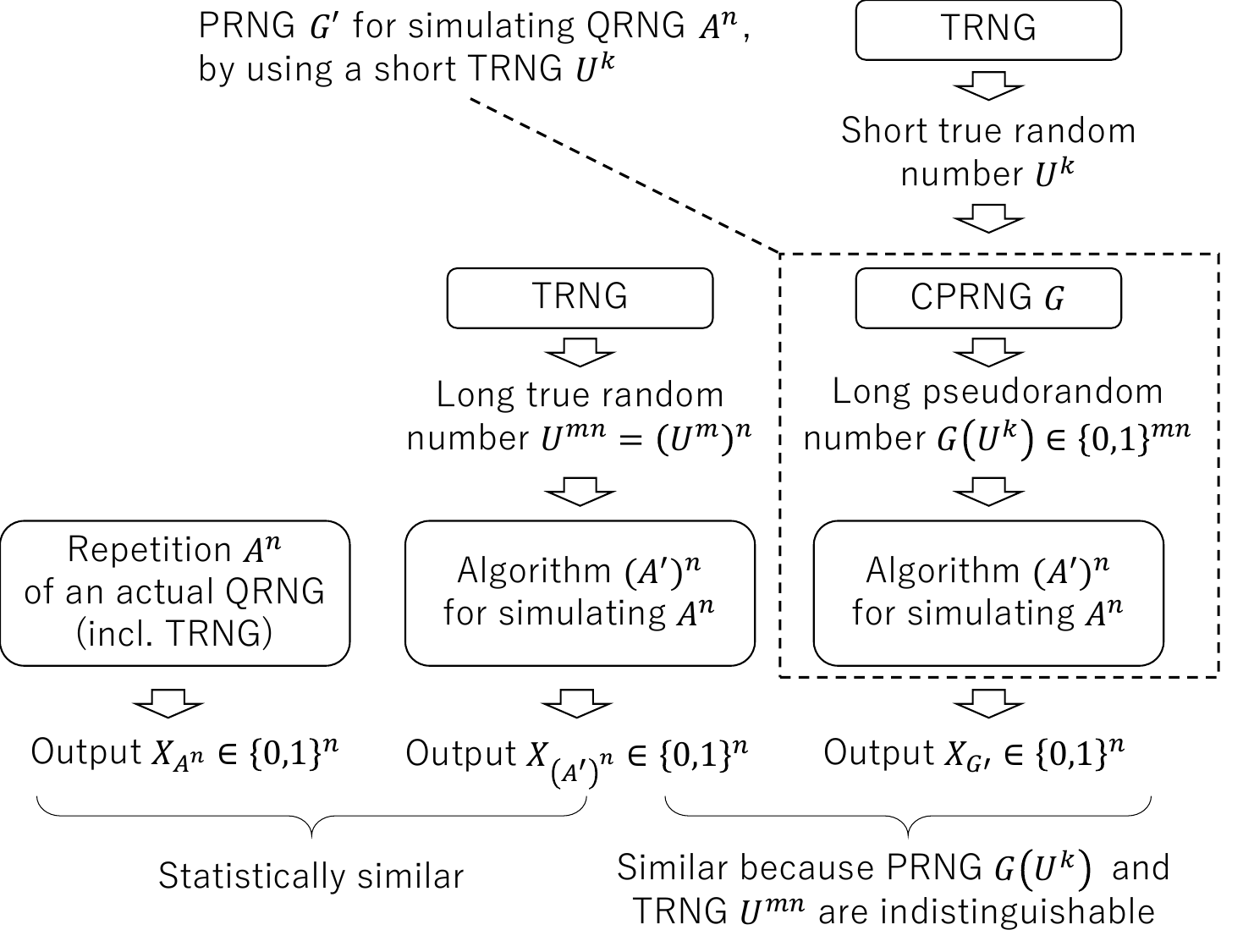}
  \caption{A construction of PRNG $G'$ that simulates the output of QRNG $A$, which is a biased bit.
  }
  \label{fig:biased_simulation}
\end{figure}

\subsubsection{Efficiently classically simulatable systems}

Next, we generalize the settings of the previous sections further and discuss them rigorously.

The generalized settings are as follows.
The output $X$ of a QRNG $A$ is no longer limited to one bit, but can be any finite set ${\cal X}$ (i.e., $X\in{\cal X}$).
Then we require that $A$ be efficiently classically simulatable \cite{doi:10.1098/rspa.2002.1097}.
In recent and fashionable terminology, this means that the system $A$ does not exhibit quantum supremacy.
The precise definition is as follows.
\begin{definition}[Efficient classical simulation] (Ref. \cite{doi:10.1098/rspa.2002.1097}, Def. 2)
\label{def:efficient_simulation}
Suppose that a physical system $A$ outputs a random variable $X\in{\cal X}$.
An efficient classical simulation for $A$ is an efficient probabilistic algorithm $A'$ which, on input $1^n$ (sequence of $n$ symbols of `1'), outputs $\in{\cal X}$ that satisfies 
\begin{align}
\left\|P(X_A)-P(X_{A'(1^n)})\right\|_1\le2^{-n}.
\label{eq:efficient_simulation}
\end{align}
We say that $A$ is efficiently classically simulatable (ECS) if such $A'$ exists.
\end{definition}

Several technical remarks are in order related with this definition:
First, the symbol $\|P(X)-P(Y)\|_1$ denotes the total variational distance between the probability distributions of random variables $X,Y\in{\cal X}$;
$\|P(X)-P(Y)\|_1:=\frac12\sum_{x\in{\cal X}}\left|P(X=x)-P(Y=x)\right|$.
Second, the input $1^n$ is meant to specify parameter $n$ and to require that $A'$ halts within a polynomial time of $n$.
Third, in the original definition (i.e., in Definition 2 of Ref. \cite{doi:10.1098/rspa.2002.1097}) $A'$ has an additional input besides $1^n$, but we here omit it because in this paper we restrict ourselves to the case where physical system $A$ is a QRNG and thus has no input.
Finally, in the statement of Definition \ref{def:efficient_simulation}, we used the term `physical system' instead of `quantum system' because the discussions below hold not only for quantum systems, but for physical systems in general.

Additionally, to comply with the fact that the output of $A$ is a ${\cal X}$ string rather than a bit string, we let the input to index $I$ also be an ${\cal X}$ string. That is, we let $I$ be an efficient probabilistic algorithm $I:{\cal X}^*\to\{0,1\}^*$.

Under these settings, we have the following generalization of Theorem 1.
\begin{theorem}
\label{thm:main_gen}
Suppose that there exists a CPRNG.
Then for any ECS physical system $A$, and for any positive and increasing polynomial $l(k)$, there exists a deterministic algorithm $B:\{0,1\}^*\to {\cal X}^*$ which, on input a $k$-bit string, outputs a ${\cal X}$ string of length $l(k)$ that satisfy 
\begin{align}
&\left|P(D_I(I(B(U^k)))=1)-P(D_I(I(X_{A^{l(k)}}))=1)\right|\nonumber\\
&\le{\rm negl}(k).
\label{eq:thm_equation}
\end{align}
for any pair of a randomness measure $I$ and a distinguisher $D_I$.
\end{theorem}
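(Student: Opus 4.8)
The plan is to follow the three-step construction sketched for the biased-bit case in Section \ref{sec:simple_case}, but now using the efficient classical simulator $A'$ guaranteed by Definition \ref{def:efficient_simulation} in place of the ad~hoc simulator of the simple case, and then to organize the indistinguishability argument as a short chain of hybrids. First I would construct $B$ explicitly. Since $A$ is ECS, there is an efficient probabilistic algorithm $A'$ that, on input $1^n$, outputs a sample whose distribution lies within total-variation distance $2^{-n}$ of $X_A$. I would couple the accuracy parameter to the seed length, say $n=k$, and view $A'$ as a \emph{deterministic} algorithm that reads its internal coin tosses from an explicit $m(k)$-bit tape, where $m(k)$ is the (polynomial) number of coins used by $A'(1^k)$. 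Running this deterministic $A'$ independently $l(k)$ times consumes $N(k):=m(k)\,l(k)$ bits and yields an $\mathcal{X}$ string of length $l(k)$; call this map $(A')^{l(k)}$. I would then define the deterministic algorithm $B(s):=(A')^{l(k)}\!\bigl(G(s)\bigr)$, i.e. feed the $N(k)$-bit tape from the CPRNG output $G(s)$ rather than from a true tape. Because $N(k)$ is a prescribed polynomial, I would first invoke the standard amplification fact (Ref.~\cite{books/crc/KatzLindell2020}, Ref.~\cite{goldreich_2001}) that from any CPRNG one can build a CPRNG of expansion factor exactly $N(k)$, so $G$ may be assumed to output precisely $N(k)$ bits; the requirement $N(k)>k$ of Definition \ref{def:PRNG} can always be met by inflating the accuracy parameter $n$ (or padding), which is where positivity and monotonicity of $l(k)$ are convenient.

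The comparison would then proceed through the intermediate distribution obtained by feeding $(A')^{l(k)}$ with a true tape, namely along the chain $X_{A^{l(k)}}\to X_{(A')^{l(k)}(U^{N(k)})}\to B(U^k)$. For the first link I would bound $\bigl\|P(X_{A^{l(k)}})-P(X_{(A')^{l(k)}(U^{N(k)})})\bigr\|_1\le l(k)\,2^{-k}$ by sub-additivity of the total-variation distance over the $l(k)$ independent blocks, and then apply the data-processing inequality to the randomized map $D_I\circ I$ to conclude that the corresponding acceptance probabilities differ by at most $l(k)\,2^{-k}$, which is negligible in $k$. For the second link I would set $D:=D_I\circ I\circ(A')^{l(k)}$, observe that it is an efficient probabilistic algorithm mapping $N(k)$ bits to one bit (a composition of polynomially many efficient algorithms, hence efficient by Definitions \ref{def:randomness_measure} and \ref{def:necessary_condition_on_I}), and apply Definition \ref{def:Cryptographic_PRNG} to this particular $D$. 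Since $B(U^k)$ is exactly $D$'s input fed by $G(U^k)$ while $X_{(A')^{l(k)}(U^{N(k)})}$ is $D$'s input fed by $U^{N(k)}$, the two acceptance probabilities differ by $\mathrm{negl}(k)$. Combining the two links by the triangle inequality yields (\ref{eq:thm_equation}), exactly in the spirit of the trivial reduction used for Theorem \ref{thm:main}.

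The main obstacle I expect is not any single deep step but the simultaneous bookkeeping that keeps every quantity polynomial and every error negligible at once: the accuracy parameter must be tied to $k$ so that the accumulated simulation error $l(k)\,2^{-n}$ is negligible, while the tape length $N(k)=m(k)\,l(k)$ must remain one fixed polynomial that the CPRNG is made to match. The only genuinely non-trivial imported ingredient is the expansion-factor amplification for CPRNGs, which licenses the assumption that $G$ emits exactly $N(k)$ bits; once that is in hand, the argument collapses to the data-processing inequality for the first link and a direct appeal to CPRNG indistinguishability (Definition \ref{def:Cryptographic_PRNG}) for the second. I would relegate the full verification of the simulation-error accounting and the efficiency of $D$ to Appendix \ref{sec:proof_thm}, as already anticipated in the text.
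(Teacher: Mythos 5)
Your proposal is correct and follows essentially the same route as the paper's proof in Appendix~\ref{sec:proof_thm}: the same hybrid chain $X_{A^{l(k)}}\to X_{(A')^{l(k)}(U^{N(k)})}\to B(U^k)$, the same appeal to CPRNG expansion-factor amplification, a hybrid/subadditivity bound on the accumulated simulation error followed by data processing (monotonicity of the variational distance), and the same reduction defining $D:=D_I\circ I\circ (A')^{l(k)}$ before concluding by the triangle inequality. The only differences are cosmetic bookkeeping: you tie the simulator's accuracy parameter to the seed length ($n=k$, error $l(k)2^{-k}$) and count coin tosses directly as $m(k)l(k)$, whereas the paper sets $n=l(k)$ (error $l(k)2^{-l(k)}$) and bounds the tape length by the running time $g(l(k))$.
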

In other words, for any ECS system $A$, there exists an algorithm $B$ that impersonates it with low randomness.
Moreover, the output size $l(k)$ of $B(U^k)$ can be chosen to be an arbitrary polynomial.

The proof of Theorem \ref{thm:main_gen} is given in Appendix \ref{sec:proof_thm}.

\section{Empirical analyses}
\label{supports}

In this section, we give empirical examples to support Theorems \ref{thm:main} and \ref{thm:main_gen}.
For this, we perform
the data analyses on two kinds of QRNGs: quantum coin toss games and the Bell tests.
We start this section by presenting the details of those datasets.

\subsection{Data}
\label{material}
\subsubsection{Quantum coin toss games}

In quantum coin toss games, a type of QRNGs, we first prepare the initial state $\ket{\Psi}=(\ket{0}+\ket{1})/\sqrt{2}$, where $\{\ket{0}, \ket{1}\}$ is an orthonormal basis of the state space $\C^2$ of a two-level system.
We then perform projection measurement on the initial state $\ket{\Psi}$ along the computational bases $\{\ket{0}, \ket{1}\}$ and find the outcome $a\in\{0,1\}$.
Repeating this process $N$ times, we obtain a bit string $a^N=(a_1, a_2,\dots, a_N)$, where $a_i\in\{0,1\}$.

\begin{table}[b]
\caption{Mean relative frequency of bit value 1 and standard deviation of three datasets of 100 bit strings with length $N=20000$ created by the quantum coin tosses. The numbers after $\pm$ are the standard deviation obtained after 100 trials.}
\begin{center}
\begin{tabular}{l|c}
\hline
\multirow{2}{*}{Device used in data acquisition} & Mean relative frequency\\
 & of bit value 1 \\
\hline\hline
\verb|ibmq_manila| & $0.4851\pm0.0034$\\
\verb|ibmq_qasm_simulator| & $0.4999\pm0.0034$\\
\verb|ibmq_qasm_simulator| (with noise) & $0.4893\pm0.0036$\\
\hline
\end{tabular}
\end{center}
\label{mean}
\end{table}%

In IBM Quantum, the initial state is restricted to the computational basis state $\ket{0}$. The state $\ket{\Psi}$ is implemented as the application of the Hadamard gate 
$H=(\ket{0}\bra{0}+\ket{0}\bra{1}+\ket{1}\bra{0}-\ket{1}\bra{1})/\sqrt{2}$ to the computational basis state $\ket{0}$.
Therefore, we can rewrite the quantum coin toss as the following three steps: i) state preparation in the computational basis $\ket{0}$ for a two-level system (qubit) in a quantum processing unit (QPU), ii) application of the Hadamard gate $H$, iii) projection measurement along the computational bases $\{\ket{0}, \ket{1}\}$.

Following the above steps, we made a dataset by using \verb|ibmq_manila| provided by IBM Quantum.
The dataset was made on February 23, 2023, and consists of 100 bit strings with the length $N$ aligned to $N=20000$.

To make a comparison between this QRNG and PRNGs, we performed classical simulations using three types of PRNGs. 
The first and second types are \verb|ibmq_qasm_simulator|, and \verb|imbq_qasm_simulator| with the device noise model of \verb|ibmq_manila|, respectively. 
The third is ChaCha20 algorithm described in Refs. \cite{rfc8439,books/crc/KatzLindell2020}.
Note that the simulators are regarded as PRNGs since they run on classical computers and use a short random seed.
We hereafter specify datasets by the name of the device used for their data acquisition process.

Table~\ref{mean} shows the mean relative frequency of bit value 1 in the entire bit string.
The numbers after $\pm$ are the standard deviation obtained after 100 trials.
Note that the mean relative frequency of \verb|ibmq_qasm_simulator| is closer to the theoretical value $1/2$ than that of \verb|ibmq_manila|, and these two values are mutually distinguishable within the standard deviations.
Note also that, in contrast, the mean relative frequency of \verb|ibmq_qasm_simulator| (with noise) is comparable to that of \verb|ibmq_manila| within the standard deviations.

\subsubsection{Innsbruck experiment}

To examine observation ii) in the Introduction, as an example of the actual Bell tests, we employ the dataset taken for \cite{PhysRevLett.81.5039}, which reports the first experiment of the Bell test free from the locality loophole.
We hereafter call the series of the Bell tests conducted in \cite{PhysRevLett.81.5039} the Innsbruck experiments.

In the Innsbruck experiments, entangled photon pair $\ket{\Phi}=(\ket{0}\ket{1}-\ket{1}\ket{0})/\sqrt{2}$ is created.
Each photon in the pair is distributed to space-like separated two points (hereafter called Alice and Bob, respectively), but some photons are lost during the distribution process.
For the distributed photons in each distribution process, Alice performs a measurement by choosing the measurement axis $x$ out of two alternatives $x\in\{0,1\}$, and obtains the outcome $a\in\{0,1\}$.
Bob also does so for each photon.
We hereafter denote Bob\rq s measurement axis and outcome by $y\in\{0,1\}$ and $b\in\{0,1\}$, respectively.

Repeating the above measurement processes many times, they store the data of the measurement outcomes with their time stamps and the records of the measurement axes.
We call each of these records of the series of measurements a sample.
In particular, we shall analyze 21 samples, each named longdist*, which are the samples actually used in the Innsbruck experiments \cite{weihs_gregor_1998_7185335}.

Given a sample, we extract a pair of {\it coincident outcomes}, namely, those detection events that occur simultaneously on both Alice\rq s and Bob\rq s sides within the threshold given in \cite{weihs_gregor_1998_7185335} with the time stamps\rq~difference. 
We denote the outcomes by $(a,b)$, and the number of the pairs of the coincident outcomes in the sample by $N$.
We construct three bit strings, that is, Alice\rq s bit string
$
a^N=(a_1,a_2,\dots,a_N),
$
Bob\rq s bit string
$
b^N=(b_1,b_2,\dots,b_N),
$
and a mixed bit string
$
c^N=(a_1, b_1, a_2, b_2,\dots, a_N, b_N).
$


By construction, we may expect that the bit string $c^N$ is less random than $a^N$ and $b^N$, if the given sample violates the Bell inequality.
For, every pair $(a_i,b_i)$ in $c^N$ has a strong correlation not described by any local hidden variable theories (LHVTs), and thereby $c^N$ becomes more structured than the bit strings expected from LHVTs.
This inherent difference between $c^N$ and the others is also examined in this paper.

\subsubsection{Efficient Simulation using ChaCha20}
To give numerical support to Theorems \ref{thm:main} and \ref{thm:main_gen},  by using the PRNG called ChaCha20 algorithm \cite{rfc8439, books/crc/KatzLindell2020}, we made the pseudo-random number strings that mimic the bit strings obtained from the above two experiments by following the procedure given in Sec.~\ref{sec:simple_case} for the biased random bit sequences.

More precisely, for the Innsbruck experiment, we used  the following process.
First given the relative frequency and length of an experimental bit string, we generated a pseudo-random number byte string having twice the length of the given bit string.
Then we convert each two bytes of the pseudo-random number byte string to a real number.
We compare each of these real numbers with a threshold value, which is the relative frequency value rescaled by $2^{16}$, and convert it to a bit 0 (1) if it is higher (lower) than the threshold.

We also performed the same procedure for the quantum coin toss games by using the mean relative frequency of the data taken by \verb|ibmq_manila|.

\subsection{Measures}
\label{measures}

We characterize the bit strings in the datasets described in the previous section by the (algorithmic) randomness. 
In addition to this, for the dataset of the Innsbruck experiment, we measure how much the data are non-local.
We hereafter introduce the measures of algorithmic randomness and non-locality.

\subsubsection{Randomness measures}


\paragraph{Lempel-Ziv (LZ) complexity}
\label{sec:LZ_complexity}

Let us suppose that we are given a bit string $s^N=(s_1,s_2,\dots s_N)\in\{0,1\}^N$, and consider the following parsing 
\begin{equation}
E(s^N)=s(1,h_1)s(h_1+1,h_2)\dots s(h_{m-1}+1,N),
\end{equation}
where $s(i,j)=(s_i, s_{i+1},\dots,s_j)$ is the substring of $s^N$ from the $i$-th bit to the $j$-th.
Now let us set $h_1=1$ and make the parsing in such a way that $s(h_{i-1}+1,h_i)$ is the minimal substring that has not already appeared as the substring $s(h_{k-1}+1,h_k)$ with $k<j$.
Then, it has been known that the $E(s^N)$ is uniquely determined if we follow the above rule of parsing \cite{1055501}.

Let $c(N)$ be the number of the substrings constructed in the above rule.
Then $c(N)$ has an upper bound \cite{1055501}
\begin{eqnarray}
c(N)&<&\frac{N}{(1-\varepsilon_N)\log_2(N)},
\label{eq:bound_on_c(S)}
\end{eqnarray}
where
\begin{eqnarray}
\varepsilon_N&=&2\frac{1+\log_2\log_2(2N)}{\log_2(N)}.
\end{eqnarray}
Note that Ineq.~(\ref{eq:bound_on_c(S)}) follows solely from the definition of $c(N)$, and thereby holds for any bit string $s^N$ of length $N$.

Motivated by this upper bound of $c(N)$, we define
\begin{equation}
K(N)=c(N){\log_2N\over N},
\end{equation}
which has been widely used as a measure of algorithmic complexity in literature \cite{PhysRevA.36.842,10.1063/1.4808251,PhysRevA.82.022102,Solis_2015,e20110886,PhysRevA.98.042131,Abbott_2019}.
We hereafter call $K(N)$ the LZ complexity in this paper, although much literature calls $K(N)$ the normalized LZ complexity.

As suggested in Ineq.~(\ref{eq:bound_on_c(S)}), the upper bound of the LZ complexity $K(N)$ depends on the length $N$.
To make it reasonable that the comparison of the algorithmic complexity between the bit strings with mutually distinct lengths, let us normalize the LZ complexity $K(N)$ by the maximal value of $c(N)$ for the given $N$.
For this, we recall that the maximal value of $K(N)$ is attained by the bit string whose substrings are exhaustively aligned and lengthened from left to right, for example,
\begin{equation}
t^N=(0,1,0,0,0,1,1,0,1,1,\ldots),
\end{equation}
whose parsing is given as
\begin{eqnarray}
t(1,1)&=&(0),\nonumber\\
t(2,2)&=&(1),\nonumber\\
t(3,4)&=&(0,0),\nonumber\\
t(5,6)&=&(0,1),
\end{eqnarray}
and so on.
In this case, the length of the bit string composed of all the mutually distinct bit strings whose lengths are up to $m$ is given by
\begin{equation}
l_m=\sum_{k=1}^mk2^k=(m-1)2^{m+1}+2.
\end{equation}
Let $m^*$ be the number to minimize the difference $N-l_m\ge0$.
Then we obtain the number of  the bit strings to compose $t^N$ as
\begin{equation}
c_{\rm max}(N)=\sum_{k=1}^{m^*}2^k+\left\lceil{N-l_{m^*} \over m^*+1}\right\rceil=2^{m^*+1}-2+\left\lceil{N-l_{m^*} \over m^*+1}\right\rceil,
\end{equation}
where $\lceil x\rceil$ is the minimal integer exceeding $x$.
It follows that the maximum of $K(N)$ is given as
\begin{equation}
K_{\rm max}(N)=c_{\rm max}(N){\log_2N\over N}.
\end{equation}
Based on the above argument, we define the normalized LZ complexity
\begin{equation}
\kappa(N)={K(N) \over K_{\rm max}(N)}={c(N) \over c_{\rm max}(N)}.
\end{equation}



To close this introduction of the LZ complexity, let us point out the inconsistencies appearing in data analyses of Ref. \cite{PhysRevA.98.042131}.
Using the fact that $\varepsilon_N$ decreases monotonically with $N$ and that $\varepsilon_{4096}<0.784$ holds, we obtain
\begin{equation}
c(N)<4.6\frac{N}{\log_2(N)}\quad {\rm for}\quad 4,096\le N.
\end{equation}
Thus for the normalized complexity measure $K(N)$,
\begin{equation}
K(N)< 4.6\quad {\rm for}\quad 4,096\le N
\label{eq:bound_on_K(N)}
\end{equation}
must always hold.

In contrast, almost half of all trials shown in Table I of Ref. \cite{PhysRevA.98.042131} violate the upper bound (\ref{eq:bound_on_K(N)}).
More precisely, for 16 of 37 trials in Table I (namely, longdist$X$ with $X=1,\dots,4$, 10, 11, 13, 20, 23, 30, 32, 33, 34, 36 and 37, and Conlt3), the authors conclude the $K(N)>7$ and $N>4,096$ hold simultaneously, but this clearly conflicts with Ineq.~(\ref{eq:bound_on_K(N)}), implying inaccuracy in their data processing.

\paragraph{Borel normality measure}
\label{sec:Borel_normality}
The Borel normality measures the distribution of substrings in the given bit string  \cite{calude2002}.
Let $B_m=\{0,1\}^m$ be the set of the bit strings with length $m$, and $N_j^m(s^N)$ be the number of occurring the lexicographical $j$-th bit string of length $m$ in $s^N$.
Let us denote the length of $s^N$ over $B_m$ by $|s^N|_m$ and define $|s^N|=|s^N|_1$. 
Then the bit string $s^N$ is Borel normal (with accuracy ${1\over\log_2|s^N|}$) if we have
\begin{eqnarray}
\left|{N_j^m(s^N) \over |s^N|_m}-{1\over 2^m}\right|\le{1\over\log_2|s^N|},
\end{eqnarray}
for every integer $1\le m\le\lfloor\log_2\log_2|s^N| \rfloor$ and $1\le j\le 2^m$ \cite{calude2002}.

From the above definition of the Borel normality, the following measure was proposed in \cite{Abbott_2019}:
\begin{eqnarray}
B(s^N)=\max\left|{N_j^m(s^N) \over |s^N|_m}-{1\over 2^m}\right|\log_2|s^N|,
\end{eqnarray}
where the maximum is over $m=1, 2, \dots, \lfloor\log_2\log_2|s^N| \rfloor$ and $j=1,2,\dots, 2^m$.
The measure $B(s^N)$ ensures that the bit string $s^N$ is Borel normal if $B(s^N)\le1$.

\subsubsection{Non-locality measure}
We first introduce the CHSH correlation function \cite{PhysRevLett.23.880}
\begin{eqnarray}
S=\left|\sum_{x,y}(-1)^{xy}\[P(A=B|xy)-P(A \neq B|xy)\]\right|,
\end{eqnarray}
where $P(A=B|xy)$ is the conditional probability of the outcome coincidence given the pair of the measurement axes $(x,y)$ and $P(A\neq B|xy)$ is that of the non-coincident outcomes. 
The CHSH correlation function has the upper bound $S\le2$ for any LHVTs  \cite{PhysRevLett.23.880} whereas for quantum mechanics it  exceeds the bound and attains $2\sqrt{2}\approx2.828$ \cite{Cirel'son1980}.
We employ the CHSH correlation function as the measure of non-locality.

\subsection{Results of the empirical analyses}
\label{result}

\subsubsection{Quantum coin tosses}

\begin{figure}[t]
  \includegraphics[width=.4\textwidth]{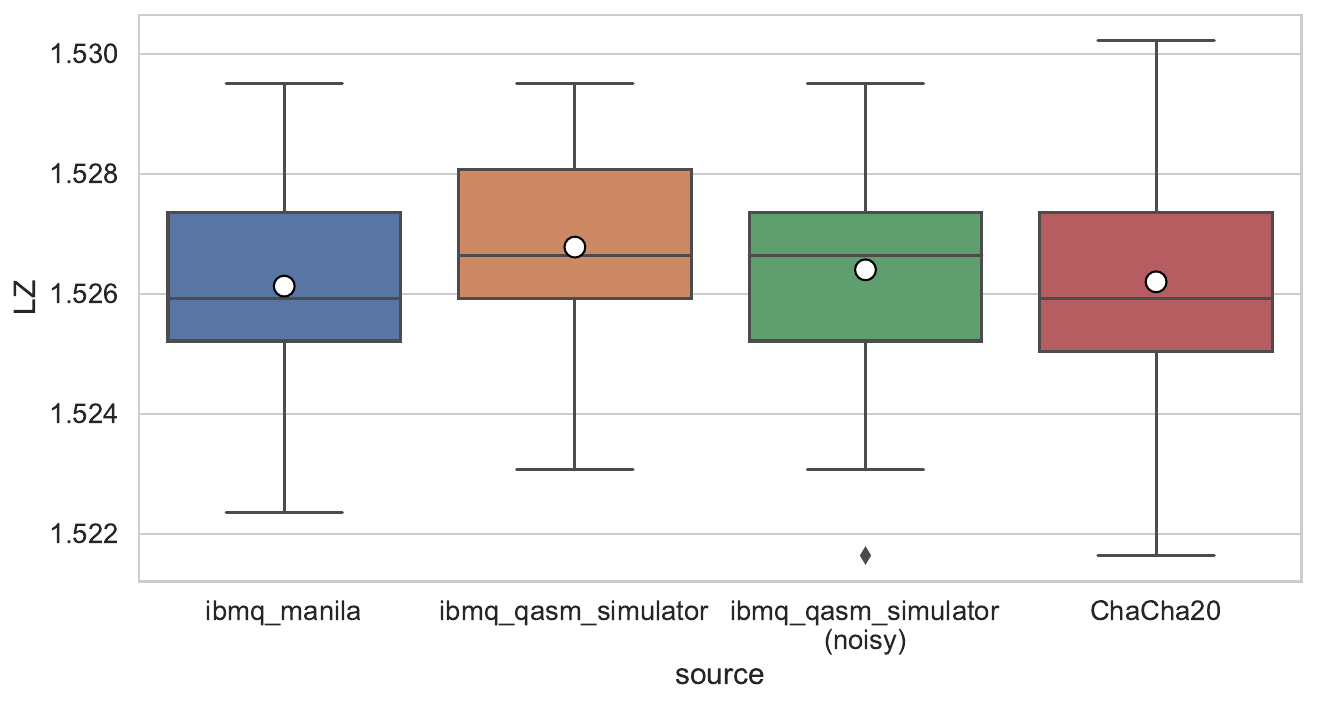}
  \includegraphics[width=.4\textwidth]{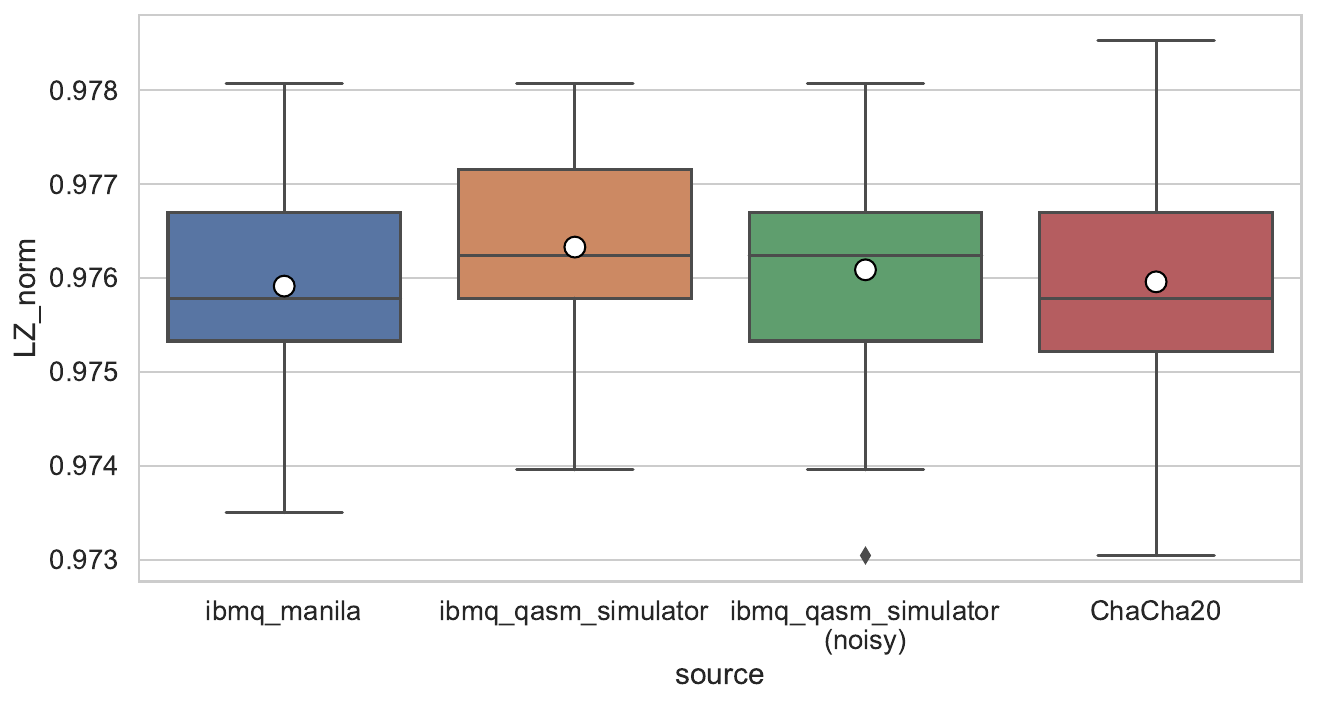}
  \includegraphics[width=.4\textwidth]{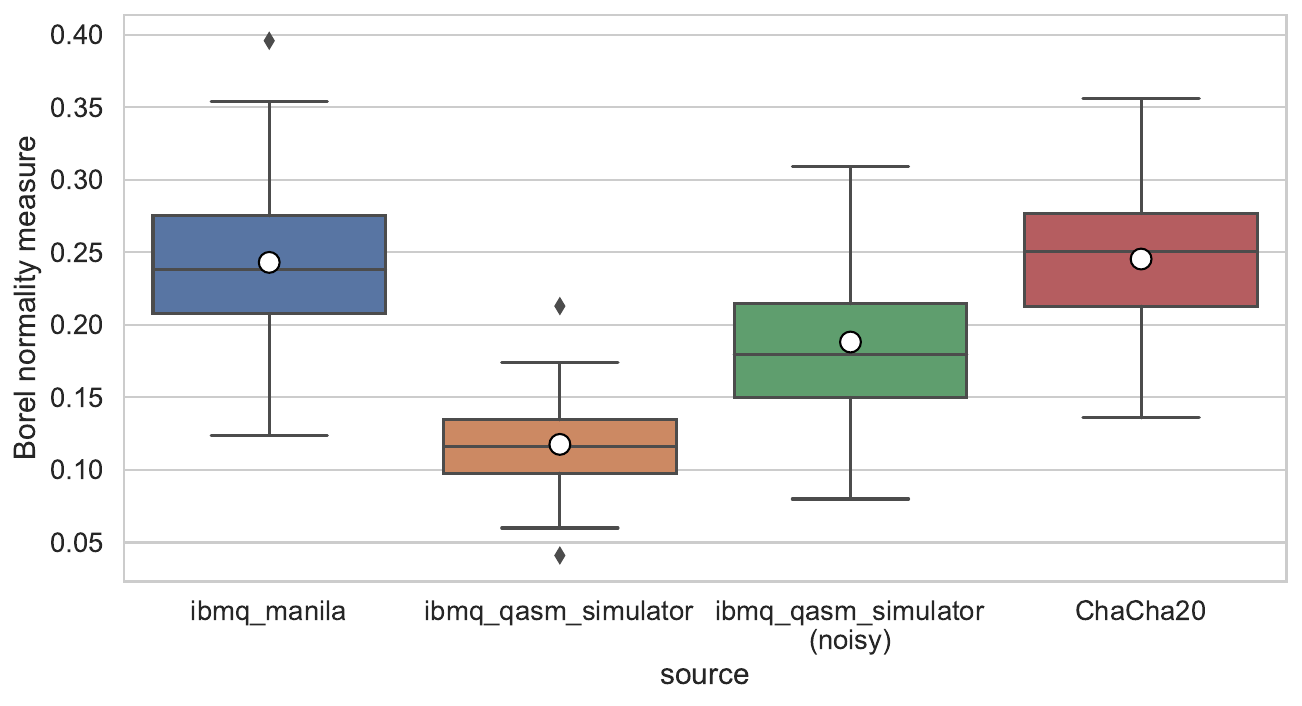}
  \caption{The box plots of the randomness measures for the quantum coin tosses: (Above) the LZ complexity, (Middle) the normalized LZ complexity, and (Below) the Borel normality measure.
Throughout this paper, we draw the box-and-whisker plots as follows.
Given the data and their attributes to be plotted, the length of the upper (lower) whisker is set as a maximum (minimum) value not exceeding 1.5 times the interquartile range (IQR, the height of the box). 
Data whose attribute values are outside this range are called outliers, which are plotted as dots.
  }
  \label{box_qcoin}
\end{figure}

We discuss whether there is a difference in the randomness of the bit strings created by a QRNG (quantum computer) and PRNGs including simulators by considering the simplest setup, {\it i.e.}, quantum coin toss games.
Figure \ref{box_qcoin} shows the randomness measures (the algorithmic complexity measures mentioned above) of our datasets.

As to the (normalized) LZ complexity, the bit strings obtained from the simulator with the device noise model and ChaCha20 are comparable to those obtained from the QRNG.
The bit strings from the simulator (without the device noise model) have a slightly higher algorithmic complexity than those from the QRNG.
This is presumably because the mean relative frequency of the data obtained from the simulator is almost $0.5$, which is significantly higher than that of the dataset from the actual device.

In contrast, the Borel normality measure distinguishes all the cases.

Let us look at the above observation by using the hypothesis testing (Welch\rq s $t$-test) on whether the simulations have no difference from the actual device:
For the case of noiseless simulation against the actual device, we found the statistic $t\approx-3.01$ with its $p$-value being $2.96\times10^{-3}$ for the LZ complexity and normalized one, whereas $t\approx20.69$ with its $p$-value being $2.15\times10^{-51}$ for the Borel normality measure.
This suggests that these two ways of creating the random bit strings are distinguishable from each other if we focus on the average values of the randomness measures.

In contrast, for the case of simulation with the device noise model against the actual device, we observed the statistic $t\approx-1.25$ with its $p$-value being $0.213$ for the LZ complexity and normalized one, while $t\approx7.58$ with its $p$-value being $1.33\times10^{-12}$ for the Borel normality measure.
This shows that these two ways of creating the random bit strings are indistinguishable from each other if we employ the (normalized) LZ complexity as shown in Theorem \ref{thm:main}, if their relative frequency is comparable.
Moreover, for the case of PRNG, the statistic is $t\approx-0.32$ and its $p$-value is $0.749$ for the LZ complexity and normalized one, while $t\approx-0.33$ with its $p$-value being $0.739$ for the Borel normality measure.
Thus, the bit strings created by ChaCha20 are indistinguishable for all the complexity measures we have employed, supporting our theorems.

\begin{figure}[t]
  \includegraphics[width=.5\textwidth]{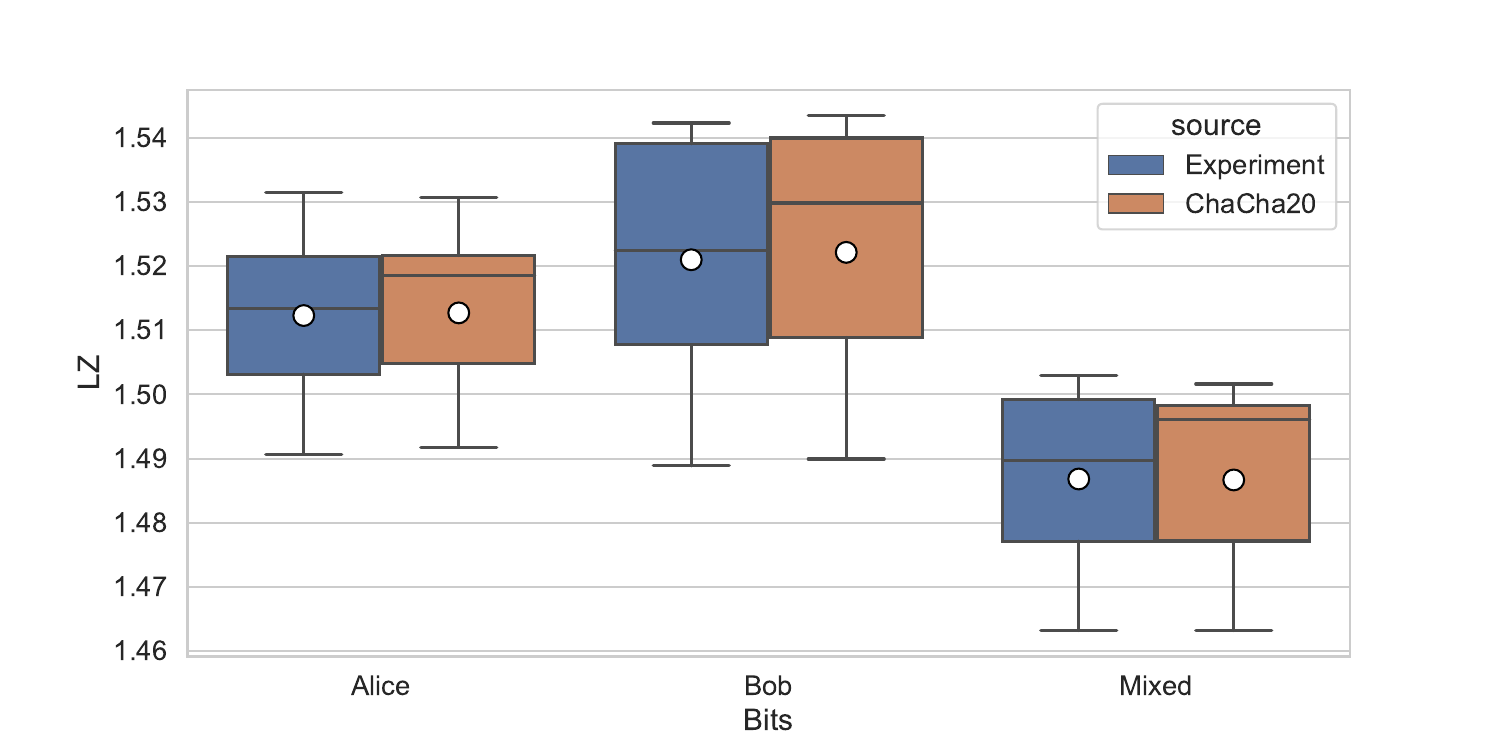}
  \includegraphics[width=.5\textwidth]{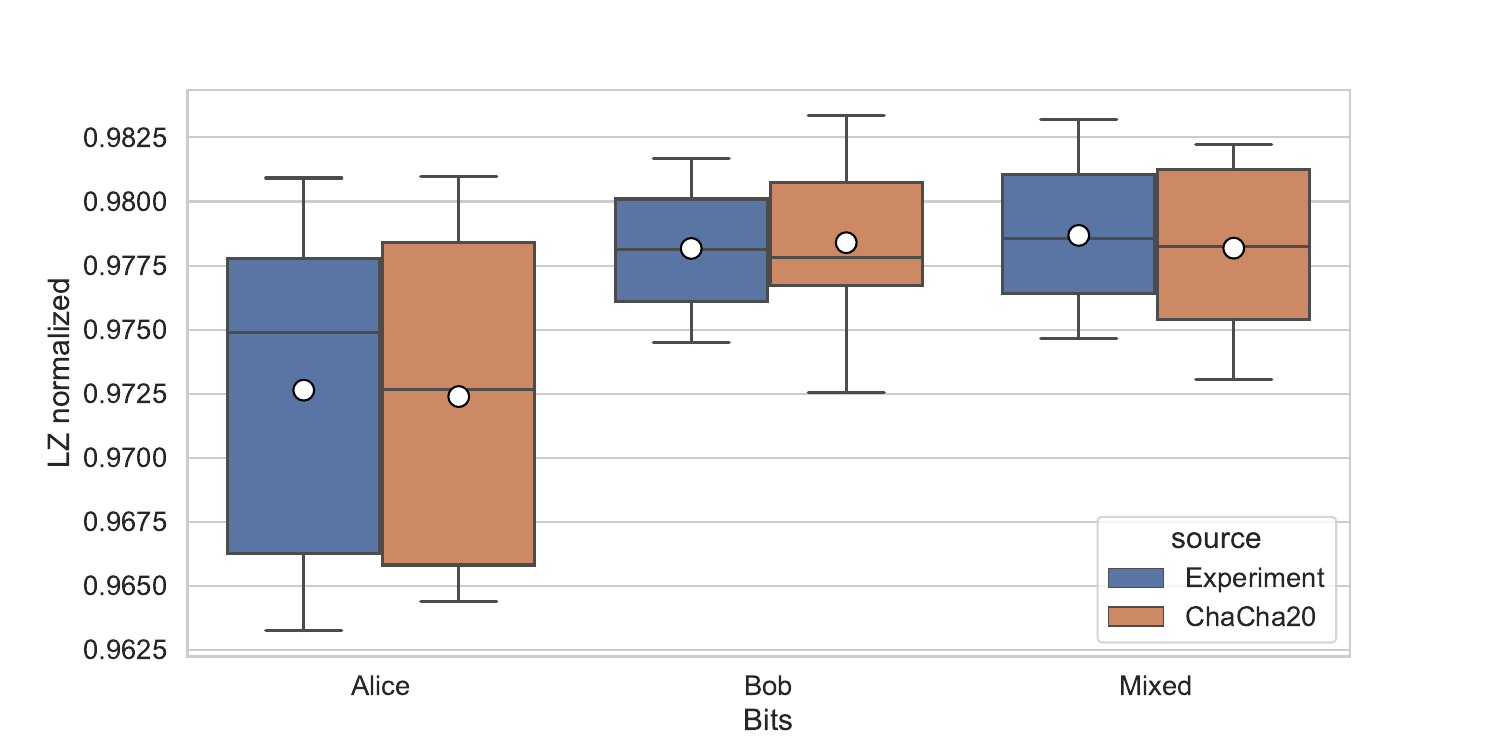}
  \includegraphics[width=.5\textwidth]{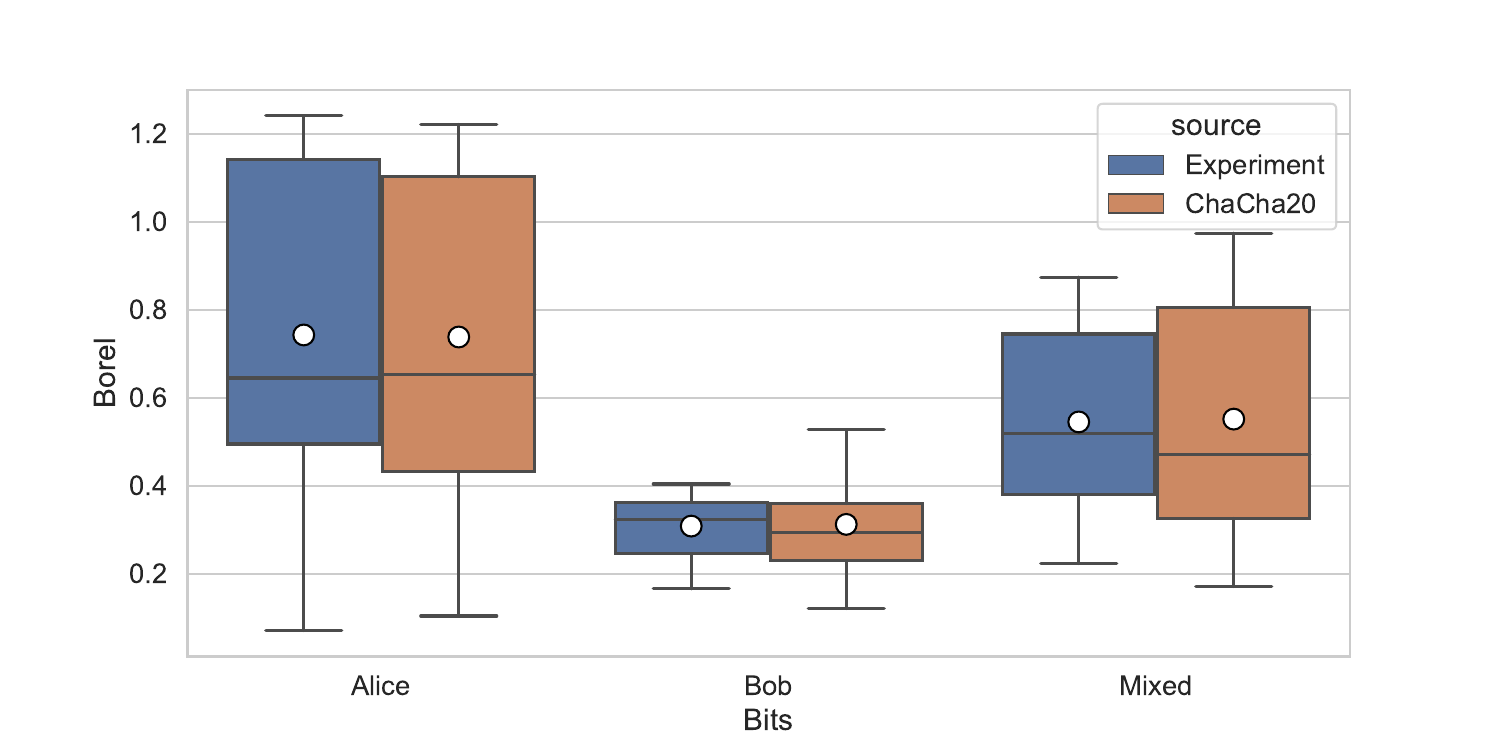}
  \caption{The box plots of the randomness measures for the bit strings generated from the Innsbruck experiment and ChaCha20: (Above) the LZ complexity, (Middle) the normalized LZ complexity, and (Below) the Borel normality measure. The outlier coming from longdist4 is omitted. The bit strings at the Bob\rq s station are more random than those at the Alice\rq s station. The mixed bit strings are less random if evaluated by the LZ measure.
  }
  \label{boxes}
\end{figure}

\subsubsection{Innsbruck experiment}

Motivated by our theorems and the discussion presented in Sec.~\ref{measures},
we proceed to the analyses of the Innsbruck experiment, or the longdist dataset. 
More concretely, we shall show the descriptive statistics of the randomness measures mentioned above and their correlation coefficients.

In our analyses, we have omitted the longdist4 sample as an outlier due to its relatively small number of the coincident events $N$; see Appendix~\ref{measure_data}.



Figure \ref{boxes} shows the descriptive statistics of the randomness measures.
All the randomness measures show that Alice\rq s bit strings tend to be more random than Bob\rq s ones.
This suggests asymmetry between Alice\rq s apparatus and Bob\rq s due to, e.g., systematic errors since the ideal Bell test has the exchange symmetry between Alice and Bob.
Note that one cannot extract information about such asymmetry by using $S$, which only quantifies the correlation between Alice and Bob. 
Note that the maximal value of the LZ complexity in our analysis is in accordance with our theoretical consideration in Sec.~\ref{sec:LZ_complexity}.
More precisely, all our analyses satisfy $K(N)\lesssim1.55$, which is accordance with $K(N)<\frac{1}{1-\epsilon_N}\approx2.652$ implied by Ineq.~(\ref{eq:bound_on_c(S)})  for the bit length $90000$, an upper bound length of the mixed bit string $c^N$ in our coincidence data.

\begin{figure}[t]
 \includegraphics[width=.5\textwidth]{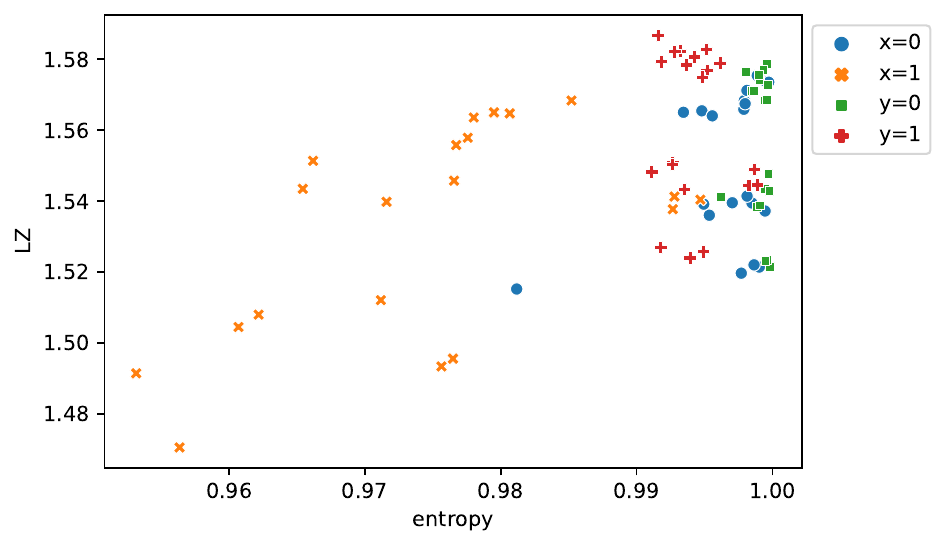}\\
 \includegraphics[width=.5\textwidth]{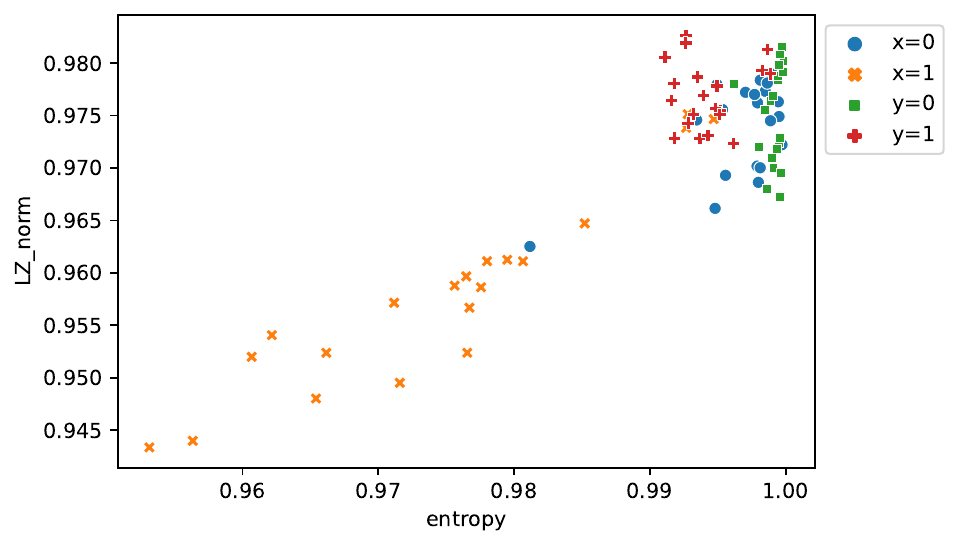}
  \caption{The scatter plots between the randomness measures and the entropy: (Above) the LZ complexity, (Below) the normalized LZ complexity. The outlier coming from longdist4 is omitted. The bit strings at the Bob\rq s station are more random than those at the Alice\rq s station. 
  }
  \label{Ent_scatter}
\end{figure}

Above we analyzed Alice and Bob individually, but we also analyzed their mixed bit strings. 
We found that the relationship between the randomness of Alice's (or Bob's) individual bits and that of the mixed bit sequence appears to be different depending on the measure.
This may reflect the fact that the randomness measures we use, although normalized, is still weakly dependent on length.
The mixed bit string is statistically the least random if measured with the LZ complexity, whereas the most random with the normalized LZ complexity.
Moreover, the Borel normality suggests that its randomness is in-between.

We note that the mixed bit sequence should be the least random in the algorithmic sense, since $S>2$ for many of the longdist data suggests that every bit pair $(a_i,b_i)$ in the mixed bit strings has the strong correlation not described by any LHVTs, resulting in a decrease of the algorithmic randomness.
This suggests that perhaps the LZ complexity most captures the latent correlations inherent in the mixed bit strings obtained from the Bell tests.

The data obtained from ChaCha20 successfully imitate those from the Innsbruck experiment.
More precisely, the interquartile range (IQR, the box heights in the box-and-whisker plots) are almost the same as those of the associated Innsbruck data.

In the rest of this subsection, we reexamine the anti-correlation suggested in \cite{PhysRevA.98.042131}.
We first check the correlation between the entropy and the (normalized) LZ complexity.
Figure \ref{Ent_scatter} shows that given the conditional probability distributions $p(a|x)$, $p(b|y)$, their entropies 
\begin{eqnarray}
H(A|X=i)&=-\sum_{a}p(a|X=i)\log_2p(a|X=i),\nonumber\\
H(B|Y=i)&=-\sum_{b}p(b|Y=i)\log_2p(b|Y=i)
\end{eqnarray}
for $i=0,1$ correlate with the (normalized) LZ complexity.
This suggests that the LZ complexity can be understood in terms of entropy.
Moreover, we observe that the entropy for the case $x=0$ and $y=0,1$ ranges from 0.98 to 1, whereas that for $x=1$ does from 0.95 to 1.
Since the initial state in Weihs\rq~experiment is the singlet state, its reduced density matrix is completely mixed, leading to the equal weight probability distribution.
Therefore, the large deviation of the entropy when $x=1$ implies the existence of (systematic) errors in Alice\rq s station.

\begin{figure}[]
  \includegraphics[width=0.5\textwidth]{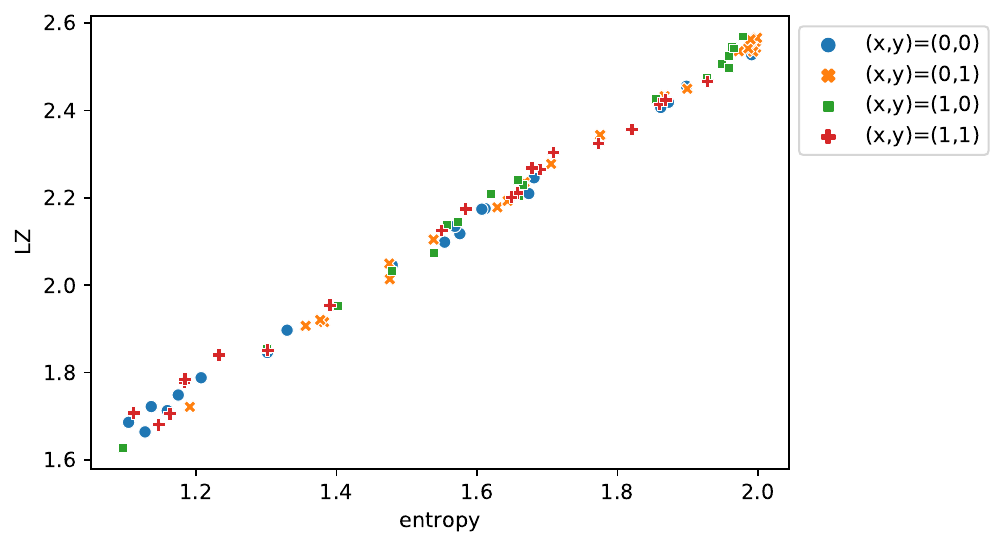}
  \caption{The scatter plot between the randomness measures and the entropy of the joint probability distribution. The outlier coming from longdist4 is omitted.
  }
  \label{Ent_scatter_joint}
\end{figure}

\begin{table*}[t]
    \centering
    \caption{The correlation analysis of the bit strings with the stratification of $N$. The Pearson correlation coefficient $r$ between the non-locality and randomness measures is calculated.}
    \label{pearson}
    \begin{tabular}{ccc|ccc}
    \hline
        Stratification & Bit sequence & Randomness measure & $r$ & $p$-value & 95\% coincidence interval  \\ \hline\hline
         		&   & LZ & 0.450 & 0.192 & [-0.251, 0.841]\\ 
         	& Alice  & Normalized LZ & 0.405 & 0.246 & [-0.302, 0.824]\\ 
         		&   & Borel normality & -0.313 & 0.379 & [-0.787, 0.394]\\ 
         		&   & LZ & 0.092 & 0.801 & [-0.571, 0.682]\\ 
        $N<20000$ & Bob  & Normalized LZ & 0.115 & 0.751 & [-0.555, 0.694]\\ 
         	(10 samples)	&   & Borel normality & -0.080 & 0.825 & [-0.676, 0.579]\\ 
         		&   & LZ & 0.448 & 0.194 & [-0.253, 0.841]\\ 
         	& Mixed  & Normalized LZ & 0.325 & 0.360 & [-0.383, 0.792]\\ 
		         &   & Borel normality & -0.222 & 0.538 & [-0.747, 0.474]\\ \hline
		         &   & LZ & 0.395 & 0.381 & [-0.510, 0.885]\\
	         & Alice  & Normalized LZ & 0.437 & 0.327 & [-0.471, 0.895]\\ 
		         &   & Borel normality & -0.054 & 0.909 & [-0.775, 0.729]\\ 
		         &   & LZ & 0.047 & 0.921 & [-0.732, 0.773]\\ 
        $20000\le N < 40000$ & Bob  & Normalized LZ & -0.014 & 0.977 & [-0.759, 0.747]\\ 
		(7 samples)         &   & Borel normality & -0.178 & 0.703 & [-0.821, 0.664]\\ 
		         &   & LZ & -0.240 & 0.604 & [-0.841, 0.626]\\ 
	         & Mixed  & Normalized LZ & -0.515 & 0.237 & [-0.914, 0.389]\\ 
		         &   & Borel normality & -0.236 & 0.610 & [-0.840, 0.629]\\ \hline 
		         &   & LZ & -0.293 & 0.811 & [-1, 1]\\ 
	         & Alice  & Normalized LZ & -0.572 & 0.612 & [-1, 1]\\ 
		         &   & Borel normality & 0.075 & 0.952 & [-1, 1]\\ 
		         &   & LZ & 0.845 & 0.359 & [-1, 1]\\ 
        $N\ge 40000$  & Bob  & Normalized LZ & 0.879 & 0.316 & [-1, 1]\\ 
	(3 samples)	         &   & Borel normality & -0.975 & 0.144 & [-1, 1]\\ 
		         &   & LZ & 0.684 & 0.520 & [-1, 1]\\ 
	         & Mixed  & Normalized LZ & 0.665 & 0.537 & [-1, 1]\\ 
		         &   & Borel normality & -0.601 & 0.589 & [-1, 1]\\ 
         \hline 
    \end{tabular}
\end{table*}

The above correlation can be observed in the entropy of the joint probability distribution
\begin{eqnarray}
&&H(AB|X=i, Y=j)\\
&&=-\sum_{a,b}p(ab|X=i, Y=j)\log_2p(ab|X=i, Y=j).\nonumber
\end{eqnarray}
Indeed, Fig.~\ref{Ent_scatter_joint} shows a correlation relation stronger than that shown in Fig.~\ref{Ent_scatter}.
Therefore we can conclude that the LZ complexity is explainable by the entropy of joint probability distribution for the Innsbruck experiment.

In contrast to entropy, we cannot say an anti-correlation exists between non-locality and randomness.
To argue this in detail, recall that the maximum of the LZ complexity depends on $N$ as shown in Sec.~\ref{measures}, and thereby the possible confounder is the number of the coincidence events $N$.
From this observation, we evaluated the Pearson correlation coefficients $r$ between $S$ and the randomness measures with the use of the stratification of $N$ into the following three classes: i) $N<20000$, ii) $20000\le N < 40000$, and iii) $N\ge 40000$.

Table \ref{pearson} shows the correlation coefficients $r$ with its $p$-values and 95\%  coincidence intervals.
We find that $r>0$ for the (normalized) LZ complexity in general whereas $r<0$ for the Borel normality except for the mixed bit sequence with $20000\le N<40000$.
This is consistent with the fact that the (normalized) LZ complexity becomes larger when the given sequence is more random, whereas the Borel normality is becoming smaller.
Besides, we observe that the correlation coefficient for Alice\rq s bit strings is comparable with that for the mixed bit strings,  except in the case $N\ge40000$.
Thus, we can conclude that the degree of the randomness (weakly) correlates with the non-locality for  Alice\rq s bit strings for $N<20000$ and $20000\le N<40000$.
This also holds for the mixed bit strings for $N\le20000$.
These observations are inconsistent with  \cite{PhysRevA.98.042131}.
On the other hand, all the $p$-values are more than 0.05, implying that it is hard to deduce the anti-correlation for the parent population if we see the longdist dataset as samples thereof.
Indeed, all the 95\% coincidence intervals of $r$ contain $r=0$.

It is of important to find the typical sample size to determine whether the non-locality (anti-)correlates with the degree of randomness in the parent population.
For example, let us take $r=0.45$, which is the correlation coefficient between $S$ and the LZ complexity of Alice\rq s bit strings for $N\le20000$.
According to the standard procedure given in \cite{cohen1988spa}, the sample size is 36 to test the hypothesis $r=0.45$ with significance level $\alpha=0.05$ (two-tailored) and power $1-\beta=0.8$.
Moreover, sample size estimation with a pre-specified confidence interval requires a larger sample size:
we need the sample size 64, 111, and 247 to test $r=0.45$ with desired 98\% confidence interval half-width 0.20, 0.15, and 0.10, respectively \cite{TQMP10-2-124}.
On the other hand, if we take $r=0.1$, which is the correlation coefficient between $S$ and the LZ complexity of Bob\rq s bit strings for $N\le20000$, then the required sample size is 782 for the test with significance level $\alpha=0.05$ (two-tailored) and power $1-\beta=0.8$.

\begin{table*}[]
    \centering
        \caption{Numerics of the measures.}
    \begin{tabular}{crrrrrrrrrrr}
    \hline
         &  &  & \multicolumn{3}{c}{LZ} & \multicolumn{3}{c}{Normalized LZ} & \multicolumn{3}{c}{Borel normality}  \\
                  \cmidrule(lr){4-6}
                  \cmidrule(lr){7-9}
                  \cmidrule(lr){10-12}
        filename & $N$ & $S$ & Alice & Bob & Mixed & Alice & Bob & Mixed & Alice & Bob & Mixed  \\ \hline\hline
        longdist0 & 17374  & 2.518 & 1.531 & 1.535 & 1.503 & 0.973 & 0.976 & 0.981 & 0.595 & 0.182 & 0.424 \\ 
        longdist1 & 17640  & 2.629 & 1.531 & 1.534 & 1.499 & 0.974 & 0.975 & 0.978 & 0.462 & 0.166 & 0.331 \\ 
        longdist2 & 27708  & 1.978 & 1.507 & 1.511 & 1.482 & 0.979 & 0.982 & 0.983 & 0.483 & 0.207 & 0.398 \\ 
        longdist3 & 27620  & 2.660 & 1.510 & 1.511 & 1.480 & 0.981 & 0.981 & 0.982 & 0.499 & 0.221 & 0.364 \\ 
        longdist4 & 1112  & 1.810 & 0.819 & 1.665 & 1.380 & 0.464 & 0.943 & 0.812 & 7.158 & 1.110 & 3.690 \\ 
        longdist5 & 27420  & 2.622 & 1.508 & 1.510 & 1.479 & 0.980 & 0.981 & 0.981 & 0.517 & 0.203 & 0.386 \\ \hline
        longdist10 & 28248  & 2.283 & 1.503 & 1.507 & 1.477 & 0.977 & 0.980 & 0.981 & 0.641 & 0.370 & 0.516 \\ 
        longdist11 & 28818  & 2.405 & 1.507 & 1.508 & 1.477 & 0.980 & 0.981 & 0.981 & 0.649 & 0.278 & 0.515 \\ 
        longdist12 & 28718  & 2.393 & 1.503 & 1.508 & 1.477 & 0.978 & 0.981 & 0.982 & 0.636 & 0.254 & 0.521 \\ 
        longdist13 & 28513  & 2.357 & 1.502 & 1.506 & 1.477 & 0.976 & 0.979 & 0.981 & 0.784 & 0.335 & 0.559 \\ \hline
        longdist20 & 43315  & 2.058 & 1.491 & 1.489 & 1.463 & 0.977 & 0.976 & 0.978 & 0.166 & 0.341 & 0.272 \\ 
        longdist22 & 42797  & 2.186 & 1.494 & 1.490 & 1.464 & 0.979 & 0.976 & 0.979 & 0.071 & 0.342 & 0.223 \\ 
        longdist23 & 42778  & 2.632 & 1.491 & 1.491 & 1.464 & 0.976 & 0.976 & 0.979 & 0.143 & 0.265 & 0.228 \\ \hline
        longdist30 & 15306  & 2.104 & 1.524 & 1.539 & 1.500 & 0.968 & 0.977 & 0.977 & 1.095 & 0.311 & 0.745 \\ 
        longdist31 & 15122  & 2.637 & 1.517 & 1.542 & 1.500 & 0.963 & 0.979 & 0.977 & 1.168 & 0.283 & 0.756 \\ 
        longdist32 & 14294  & 2.703 & 1.521 & 1.538 & 1.499 & 0.964 & 0.976 & 0.975 & 1.184 & 0.562 & 0.874 \\ 
        longdist33 & 15113  & 2.055 & 1.519 & 1.541 & 1.498 & 0.964 & 0.978 & 0.975 & 1.170 & 0.404 & 0.794 \\ 
        longdist34 & 14824  & 1.949 & 1.520 & 1.536 & 1.499 & 0.964 & 0.974 & 0.976 & 1.133 & 0.361 & 0.764 \\ 
        longdist35 & 14573  & 2.728 & 1.523 & 1.542 & 1.500 & 0.967 & 0.979 & 0.976 & 1.198 & 0.338 & 0.745 \\ 
        longdist36 & 14571  & 2.720 & 1.527 & 1.541 & 1.502 & 0.969 & 0.978 & 0.977 & 1.020 & 0.378 & 0.743 \\ 
        longdist37 & 14673  & 1.953 & 1.518 & 1.540 & 1.498 & 0.963 & 0.978 & 0.975 & 1.241 & 0.366 & 0.743 \\ \hline 
    \end{tabular}
    \label{numerics}
\end{table*}

\section{Conclusion}
\label{conc}

In this paper, we presented a no-go theorems (and its generalization) showing the contradiction between the existence of the CPRNGs and that of the efficiently computable randomness measures distinguishing the outcomes from the PRNGs and those from ECS systems. 
Given the success of modern cryptography which hinges on the presumed existence of the CPRNGs, it seems difficult to deny the assumption in practice, despite the fact that by definition PRNGs (including CPRNGs) are less random than ECS systems (see Section \ref{sec:PRNG}).   Accordingly, we could say that there are no efficiently computable randomness measures capable of distinguishing the outcomes of PRNGs from those of the ECS systems.  It follows that if the QRNG under consideration is ECS then these exists no random measures to distinguish the QRNG from a properly chosen PRNG.    

In addition, we gave two comparative analyses between the PRNGs and the actual data on the quantum coin tosses and the Innsbruck experiment.
In both analyses, we found that the average values of the algorithmic complexity measures for the PRNGs are indistinguishable from those for the QRNGs.  This is an assuring result and is consistent with our theorem.
In the meantime, we reconsidered the former analyses \cite{PhysRevA.98.042131} which erroneously reported an anti-correlation between the randomness and non-locality in the Innsbruck experiment.  In contrast, we found that the randomness measures show no such correlation between locality and randomness in general.  Instead, a clear correlation is found between the randomness measures and (joint) binary entropy, suggesting the existence of possible systematic errors in Alice\rq s apparatus.  

The fact that our theorem can be applied only to ECS quantum systems indicates that for QRNG the property of ECS may be an important characteristic when it comes to the study of the nature of randomness.  Since this connection has been largely unnoticed in the existing literature \cite{PhysRevA.82.022102,Solis_2015,e20110886,PhysRevA.98.042131,Abbott_2019}, it would be of interest to find quantum systems which are not ECS and examine if their outcomes exhibit any signature different from PRNG.

\section*{Acknowledgment}
We thank G. Weihs for providing the dataset of the Innsbruck experiment.
This work was supported by MEXT Quantum Leap Flagship Program (MEXT Q-LEAP) Grant Number JPMXS0120319794, and by JSPS KAKENHI Grant Number 20H01906 and JP22K13970.

\appendix

\section{Numerical features of the measures in the longdist* samples}
\label{measure_data}

Table~\ref{numerics} shows the statistics of the longdist datasets in the Innsbruck experiment.
The number of the pairs of the coincident outcomes $N$ obtained in our analyses is almost identical to that obtained in the original analyses by Weihs \cite{weihs_gregor_1998_7185335} with a difference of about ${\cal O}(10)$, except the sample longdist4. 
In contrast, $N$ given in \cite{PhysRevA.98.042131} differs from the original analyses by Weihs about ${\cal O}(10^3)$.

The exceptional behavior of the sample longdist4 is presumably due to a difference between the stored and actual threshold values.
Indeed, in the original analyses \cite{weihs_gregor_1998_7185335}, the stored threshold value and $N$ for longdist4 are completely identical to those for longdist3, which is unlikely.
Taking this possibility into account, we regarded longdist4 as an outlier and excluded it from our analyses.

\section{Proof of Theorem \ref{thm:main_gen}}
\label{sec:proof_thm}

In this section, we prove Theorem \ref{thm:main_gen}.

We begin by giving an overview of the proof.
The basic idea here is the same as in the case of the biased bit generator in Section \ref{sec:simple_case} (see Fig. \ref{fig:proofsketch}).

As we assumed that system $A$ is ECS, there exists a probabilistic algorithm $A'$ which simulates $A$ with arbitrary accuracy.
Let $\bar{A}(1^n)$ be the algorithm obtained by repeating $A'(1^n)$ $n$ times.  
Since $\bar{A}(1^n)$ is also a probabilistic algorithm, it can be described as a deterministic algorithm having a TRNG $U^{l_G(n)}$ as an auxiliary input which $\bar{A}(1^n)$ uses for its internal coin tossing, with $l_G(n)$ being a polynomial of $n$.
By replacing the auxiliary input $U^{l_G(n)}$ by a CPRNG $G(U^k)$ ($\in\{0,1\}^{l_G(k)}$), we obtain the desired algorithm $B(U^k)$.

To see that $B(U^k)$ thus obtained is indeed indistinguishable from the original physical system $A^{l(k)}$, first note that $\bar{A}(1^{l(k)})$ simulates the output of $A^{l(k)}$ with arbitrary accuracy, due to the ECS property of $\bar{A}(1^n)$.
Also note that, by definition of the CPRNG, TRNG $U^{l_G(k)}$ and CPRNG $G(U^k)$ are indistinguishable from each other, and thus their outputs followed by an application of an identical algorithm $\bar{A}(1^{l(k)})$ are also indistinguishable; namely, the outputs of $\bar{A}(1^{l(k)})$ and $B(U^k)$ are indistinguishable.
As a result, the outputs of $A^{l(k)}$ and $B(U^k)$ are indistinguishable.

We will below elaborate this discussion.

\begin{figure}[t]
  \includegraphics[width=\linewidth]{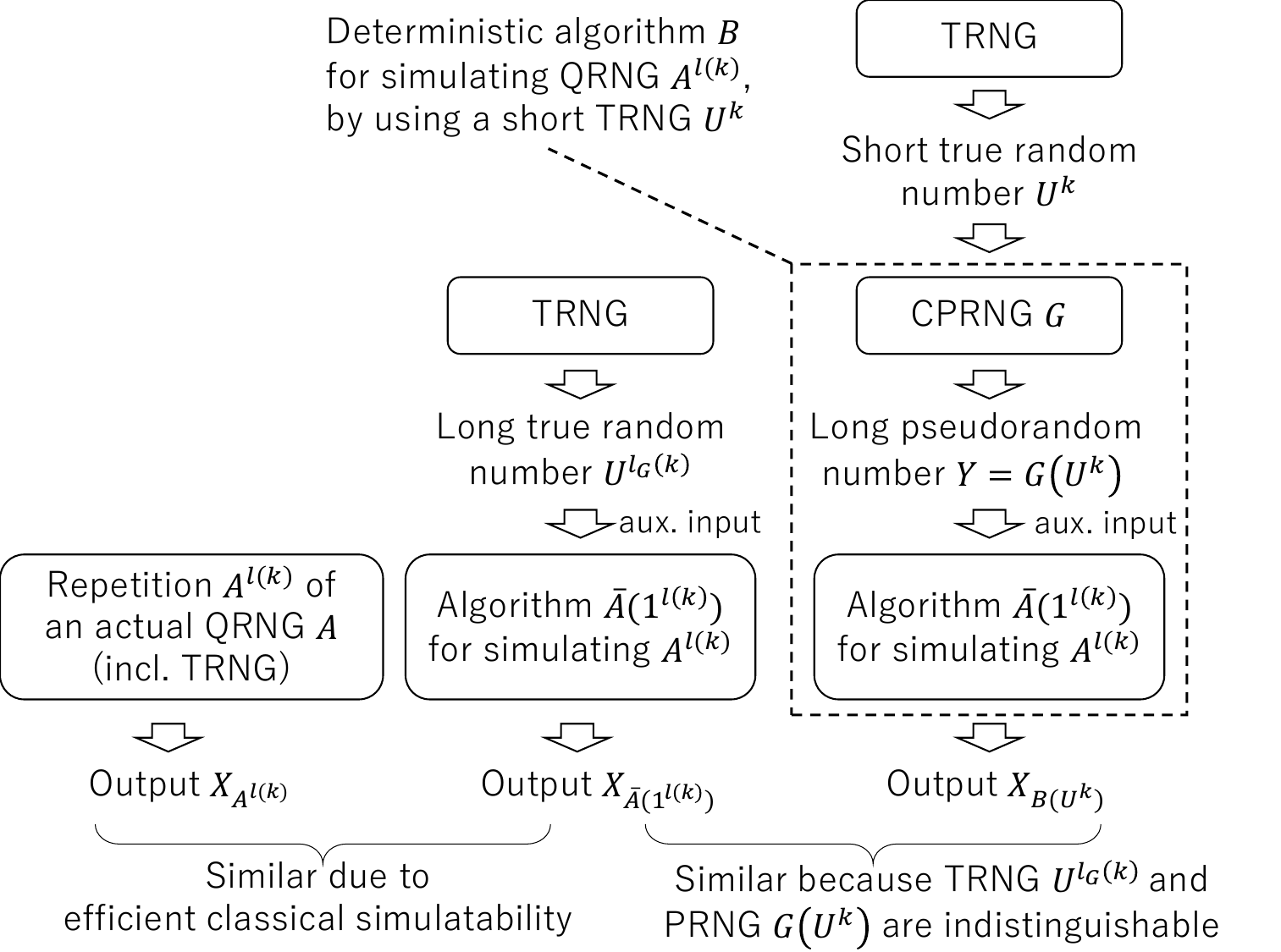}
  \caption{An overview of the proof of Theorem \ref{thm:main_gen}.
  Algorithm $\bar{A}(1^n)$ is the one in which the classical simulation algorithm $A'(1^n)$ of $A$ is repeated  $n$ times.
  In the above figure, we set $n=l(k)$.
  }
  \label{fig:proofsketch}
\end{figure}

\subsection{Construction of algorithm $B$}

Following Definition \ref{def:efficient_simulation}, fix an efficient classical simulation of $A$, and denote it by $A'$.

Let $\bar{A}(1^n)$ be the algorithm obtained by repeating $A'(1^n)$ $n$ times.
Since $\bar{A}$ is a polynomial-time algorithm, the running time of $\bar{A}(1^n)$ can be bounded from above by a polynomial $g(n)$.
Without loss of generality, we may assume that $g(n)$ is monotonically increasing.

Recall that probabilistic algorithms in general can be described as a deterministic algorithm having a random tape as auxiliary inputs (see, e.g., Ref. \cite{goldreich_2001}, Section 1.3.2.2).
In this description, one can regard $\bar{A}(1^n)$ as a deterministic algorithm having TRNG $U^{l_G(n)}$ as an auxiliary input (to be used for its internal coin tossing), and halts within time $g(n)$.

We define algorithm $B$ by replacing $\bar{A}(1^n)$'s auxiliary input TRNG $U^{l_G(n)}$ by a CPRNG $G(U^k)$.
More precisely, choose a CPRNG $G:\{0,1\}^k\to\{0,1\}^{l_G(k)}$ with an expansion factor $l_G(k)=g(l(k))$ (the existence of such $G$ is guranteed, e.g.,  by Ref. \cite{goldreich_2001}, Sections 3.3.1 and 3.3.2, or Ref. \cite{books/crc/KatzLindell2020}, Theorem 8.7).
Then define a deteriministic algorithm $B:\{0,1\}^*\to{\cal X}^*$ as
\begin{enumerate}
\item On input a random seed $s\in\{0,1\}^k$, $B$ inputs $s$ to $G$  and obtains the output $r\in\{0,1\}^{l_G(k)}$.
\item $B$ starts algorithm $\bar{A}$ with setting $1^{l(k)}$ as the input and $s$ as the auxiliary input, and obtains the output $x=(x_1,\dots,x_{l(k)})\in{\cal X}^{l(k)}$.
$B$ outputs $x$.
\end{enumerate}

\subsection{Indistinguishability of $A^{l(k)}$ and $\bar{A}(1^{l(k)})$}

In the above setting, the variational distance between the outputs of $A^n$ and $\bar{A}(1^n)$ can be bounded as
\begin{align}
&\|P(X_{A^n})-P(X_{\bar{A}(1^n)})\|_1\nonumber\\
&\le\|P(X_{A^n}^n)-P(X_AX_{\bar{A}(1^{n-1})})\|_1\nonumber\\
&\quad+\|P(X_AX_{\bar{A}(1^{n-1})})-P(X_{\bar{A}(1^n)}^n)\|_1\nonumber\\
&=\|P(X_A)\|\|P(X_{A_{n-1}})-P(X_{\bar{A}(1^{n-1})})\|_1\nonumber\\
&\quad+\|P(X_A)-P(X_A')\|_1\|P(X_{\bar{A}(1^{n-1})})\|_1\nonumber\\
&\le\|P(X_{A_{n-1}})-P(X_{\bar{A}(1^{n-1})})\|_1+2^{-n}\nonumber\\
&\le\cdots\le n2^{-n}\le {\rm negl}(n),
\label{eq:A'n_simlatability}
\end{align}
where $\|P(X)\|_1$ denotes the trace norm of the probability distribution of $X$, $\|P(X)\|_1:=\frac12\sum_{x\in{\cal X}}P(X=x)$.
Thus by letting $n=l(k)$ (according to the setting of the previous subsection), we have
\begin{align}
\left\|P(X_{A^{l(k)}})-P(X_{\bar{A}(1^{l(k)})})\right\|_1\le {\rm negl}(l(k))\le {\rm negl}(k).
\end{align}
Then by using the monotonicity of the variational distance (see, e.g., \cite{nielsen_chuang_2010}, Theorem 9.2), we have
\begin{align}
&\left|P(D_I(I(X_{A^{l(k)}}))=1)
-P(D_I(I(X_{\bar{A}(1^{l(k)})}))=1)\right|\nonumber\\
&\le  {\rm negl}(k).
\label{eq:proof_negl1}
\end{align}

\subsection{Indistinguishability of $\bar{A}(1^{l(k)})$ and $B(U^k)$}

Next we define a distinguisher algorithm $D:\{0,1\}^*\to\{0,1\}$ for CPRNG having an expansion factor $l_G(k)$ as
\begin{enumerate}
\item On input $r\in\{0,1\}^*$, $D$ finds $k\in\N$ satisfying $l_G(k)=|r|$.
If such $k$ is not found, $D$ halts.
\item $D$ starts $\bar{A}$ with setting $1^{l(k)}$ as input and $r$ as the auxiliary input, and obtains the output $x^{l(k)}$.
\item $D$ inputs $x^{l(k)}$ to $D_I(I(\dots))$ and obtains the output $b\in\{0,1\}$.
$D$ outputs $b$.
\end{enumerate}
To see that this algorithm is indeed efficient, note that each step above can be executed within a polynomial time of $k$, and that $k$ is always less than the input length $l_G(k)$ of $D$, since $l_G(k)$ is an expansion factor of $G$.

As we assume that $G$ is a CPRNG, we have
\begin{align}
\left|P(D(G(U^k))=1)-P(D(U^{l_G(k)})=1)\right|\le {\rm negl}(k).
\label{eq:thm_indistinguishability}
\end{align}

Also, by definition, the situations where one inputs $U^{l_G(k)}$ to $D$, and where one input $1^{l(k)}$ to $\bar{A}$ are identical.
Thus we have
\begin{align}
P(D(U^{l_G(k)})=1)=P(D(I(X_{\bar{A}(1^{l(k)})}))=1).
\label{eq:thm_D_bar1}
\end{align}
Similarly, since the situations where one inputs $G(U^k)$ to $D$, and where one inputs $U^k$ to $B$ are identical, 
\begin{align}
P(D(G(U^k))=1]=P[D(I((B(U^k)))=1).
\label{eq:thm_D_bar2}
\end{align}

Then by combining relations (\ref{eq:thm_indistinguishability}), (\ref{eq:thm_D_bar1}), and (\ref{eq:thm_D_bar2}), we have
\begin{align}
&\left|P(D(I(X_{\bar{A}(1^{l(k)})}))=1)-P(D(I((B(U_k)))=1)\right|\nonumber\\
&\le {\rm negl}(k).
\label{eq:proof_negl2}
\end{align}

\subsection{Triangle inequality}

Finally, by applying the triangle inequality to inequalities (\ref{eq:proof_negl1}) and (\ref{eq:proof_negl2}), we obtain (\ref{eq:thm_equation}).

\bibliography{ref_qrand}

\begin{thebibliography}{42}%
\makeatletter
\providecommand \@ifxundefined [1]{%
 \@ifx{#1\undefined}
}%
\providecommand \@ifnum [1]{%
 \ifnum #1\expandafter \@firstoftwo
 \else \expandafter \@secondoftwo
 \fi
}%
\providecommand \@ifx [1]{%
 \ifx #1\expandafter \@firstoftwo
 \else \expandafter \@secondoftwo
 \fi
}%
\providecommand \natexlab [1]{#1}%
\providecommand \enquote  [1]{``#1''}%
\providecommand \bibnamefont  [1]{#1}%
\providecommand \bibfnamefont [1]{#1}%
\providecommand \citenamefont [1]{#1}%
\providecommand \href@noop [0]{\@secondoftwo}%
\providecommand \href [0]{\begingroup \@sanitize@url \@href}%
\providecommand \@href[1]{\@@startlink{#1}\@@href}%
\providecommand \@@href[1]{\endgroup#1\@@endlink}%
\providecommand \@sanitize@url [0]{\catcode `\\12\catcode `\$12\catcode
  `\&12\catcode `\#12\catcode `\^12\catcode `\_12\catcode `\%12\relax}%
\providecommand \@@startlink[1]{}%
\providecommand \@@endlink[0]{}%
\providecommand \url  [0]{\begingroup\@sanitize@url \@url }%
\providecommand \@url [1]{\endgroup\@href {#1}{\urlprefix }}%
\providecommand \urlprefix  [0]{URL }%
\providecommand \Eprint [0]{\href }%
\providecommand \doibase [0]{https://doi.org/}%
\providecommand \selectlanguage [0]{\@gobble}%
\providecommand \bibinfo  [0]{\@secondoftwo}%
\providecommand \bibfield  [0]{\@secondoftwo}%
\providecommand \translation [1]{[#1]}%
\providecommand \BibitemOpen [0]{}%
\providecommand \bibitemStop [0]{}%
\providecommand \bibitemNoStop [0]{.\EOS\space}%
\providecommand \EOS [0]{\spacefactor3000\relax}%
\providecommand \BibitemShut  [1]{\csname bibitem#1\endcsname}%
\let\auto@bib@innerbib\@empty
\bibitem [{\citenamefont {Freedman}\ and\ \citenamefont
  {Clauser}(1972)}]{PhysRevLett.28.938}%
  \BibitemOpen
  \bibfield  {author} {\bibinfo {author} {\bibfnamefont {S.~J.}\ \bibnamefont
  {Freedman}}\ and\ \bibinfo {author} {\bibfnamefont {J.~F.}\ \bibnamefont
  {Clauser}},\ }\bibfield  {title} {\bibinfo {title} {Experimental test of
  local hidden-variable theories},\ }\href
  {https://doi.org/10.1103/PhysRevLett.28.938} {\bibfield  {journal} {\bibinfo
  {journal} {Phys. Rev. Lett.}\ }\textbf {\bibinfo {volume} {28}},\ \bibinfo
  {pages} {938} (\bibinfo {year} {1972})}\BibitemShut {NoStop}%
\bibitem [{\citenamefont {Aspect}\ \emph {et~al.}(1982)\citenamefont {Aspect},
  \citenamefont {Dalibard},\ and\ \citenamefont {Roger}}]{Aspect:1982}%
  \BibitemOpen
  \bibfield  {author} {\bibinfo {author} {\bibfnamefont {A.}~\bibnamefont
  {Aspect}}, \bibinfo {author} {\bibfnamefont {J.}~\bibnamefont {Dalibard}},\
  and\ \bibinfo {author} {\bibfnamefont {G.}~\bibnamefont {Roger}},\ }\bibfield
   {title} {\bibinfo {title} {Experimental test of {B}ell's inequalities using
  time-varying analyzers},\ }\href
  {https://doi.org/10.1103/PhysRevLett.49.1804} {\bibfield  {journal} {\bibinfo
   {journal} {Phys. Rev. Lett.}\ }\textbf {\bibinfo {volume} {49}},\ \bibinfo
  {pages} {1804} (\bibinfo {year} {1982})}\BibitemShut {NoStop}%
\bibitem [{\citenamefont {Tittel}\ \emph {et~al.}(1998)\citenamefont {Tittel},
  \citenamefont {Brendel}, \citenamefont {Zbinden},\ and\ \citenamefont
  {Gisin}}]{Tittel:1998}%
  \BibitemOpen
  \bibfield  {author} {\bibinfo {author} {\bibfnamefont {W.}~\bibnamefont
  {Tittel}}, \bibinfo {author} {\bibfnamefont {J.}~\bibnamefont {Brendel}},
  \bibinfo {author} {\bibfnamefont {H.}~\bibnamefont {Zbinden}},\ and\ \bibinfo
  {author} {\bibfnamefont {N.}~\bibnamefont {Gisin}},\ }\bibfield  {title}
  {\bibinfo {title} {Violation of {B}ell inequalities by photons more than 10
  km apart},\ }\href {https://doi.org/10.1103/PhysRevLett.81.3563} {\bibfield
  {journal} {\bibinfo  {journal} {Phys. Rev. Lett.}\ }\textbf {\bibinfo
  {volume} {81}},\ \bibinfo {pages} {3563} (\bibinfo {year}
  {1998})}\BibitemShut {NoStop}%
\bibitem [{\citenamefont {Weihs}\ \emph
  {et~al.}(1998{\natexlab{a}})\citenamefont {Weihs}, \citenamefont {Jennewein},
  \citenamefont {Simon}, \citenamefont {Weinfurter},\ and\ \citenamefont
  {Zeilinger}}]{PhysRevLett.81.5039}%
  \BibitemOpen
  \bibfield  {author} {\bibinfo {author} {\bibfnamefont {G.}~\bibnamefont
  {Weihs}}, \bibinfo {author} {\bibfnamefont {T.}~\bibnamefont {Jennewein}},
  \bibinfo {author} {\bibfnamefont {C.}~\bibnamefont {Simon}}, \bibinfo
  {author} {\bibfnamefont {H.}~\bibnamefont {Weinfurter}},\ and\ \bibinfo
  {author} {\bibfnamefont {A.}~\bibnamefont {Zeilinger}},\ }\bibfield  {title}
  {\bibinfo {title} {{Violation of Bell's Inequality under Strict Einstein
  Locality Conditions}},\ }\href {https://doi.org/10.1103/PhysRevLett.81.5039}
  {\bibfield  {journal} {\bibinfo  {journal} {Phys. Rev. Lett.}\ }\textbf
  {\bibinfo {volume} {81}},\ \bibinfo {pages} {5039} (\bibinfo {year}
  {1998}{\natexlab{a}})}\BibitemShut {NoStop}%
\bibitem [{\citenamefont {Rowe}\ \emph {et~al.}(2001)\citenamefont {Rowe},
  \citenamefont {Kielpinski}, \citenamefont {Meyer}, \citenamefont {Sackett},
  \citenamefont {Itano}, \citenamefont {Monroe},\ and\ \citenamefont
  {Wineland}}]{Rowe:2001}%
  \BibitemOpen
  \bibfield  {author} {\bibinfo {author} {\bibfnamefont {M.~A.}\ \bibnamefont
  {Rowe}}, \bibinfo {author} {\bibfnamefont {D.}~\bibnamefont {Kielpinski}},
  \bibinfo {author} {\bibfnamefont {V.}~\bibnamefont {Meyer}}, \bibinfo
  {author} {\bibfnamefont {C.~A.}\ \bibnamefont {Sackett}}, \bibinfo {author}
  {\bibfnamefont {W.~M.}\ \bibnamefont {Itano}}, \bibinfo {author}
  {\bibfnamefont {C.}~\bibnamefont {Monroe}},\ and\ \bibinfo {author}
  {\bibfnamefont {D.~J.}\ \bibnamefont {Wineland}},\ }\bibfield  {title}
  {\bibinfo {title} {Experimental violation of a {B}ell's inequality with
  efficient detection},\ }\href {https://doi.org/10.1038/35057215} {\bibfield
  {journal} {\bibinfo  {journal} {Nature}\ }\textbf {\bibinfo {volume} {409}},\
  \bibinfo {pages} {791} (\bibinfo {year} {2001})}\BibitemShut {NoStop}%
\bibitem [{\citenamefont {Sakai}\ \emph {et~al.}(2006)\citenamefont {Sakai},
  \citenamefont {Saito}, \citenamefont {Ikeda}, \citenamefont {Itoh},
  \citenamefont {Kawabata}, \citenamefont {Kuboki}, \citenamefont {Maeda},
  \citenamefont {Matsui}, \citenamefont {Rangacharyulu}, \citenamefont
  {Sasano}, \citenamefont {Satou}, \citenamefont {Sekiguchi}, \citenamefont
  {Suda}, \citenamefont {Tamii}, \citenamefont {Uesaka},\ and\ \citenamefont
  {Yako}}]{Sakai:2006}%
  \BibitemOpen
  \bibfield  {author} {\bibinfo {author} {\bibfnamefont {H.}~\bibnamefont
  {Sakai}}, \bibinfo {author} {\bibfnamefont {T.}~\bibnamefont {Saito}},
  \bibinfo {author} {\bibfnamefont {T.}~\bibnamefont {Ikeda}}, \bibinfo
  {author} {\bibfnamefont {K.}~\bibnamefont {Itoh}}, \bibinfo {author}
  {\bibfnamefont {T.}~\bibnamefont {Kawabata}}, \bibinfo {author}
  {\bibfnamefont {H.}~\bibnamefont {Kuboki}}, \bibinfo {author} {\bibfnamefont
  {Y.}~\bibnamefont {Maeda}}, \bibinfo {author} {\bibfnamefont
  {N.}~\bibnamefont {Matsui}}, \bibinfo {author} {\bibfnamefont
  {C.}~\bibnamefont {Rangacharyulu}}, \bibinfo {author} {\bibfnamefont
  {M.}~\bibnamefont {Sasano}}, \bibinfo {author} {\bibfnamefont
  {Y.}~\bibnamefont {Satou}}, \bibinfo {author} {\bibfnamefont
  {K.}~\bibnamefont {Sekiguchi}}, \bibinfo {author} {\bibfnamefont
  {K.}~\bibnamefont {Suda}}, \bibinfo {author} {\bibfnamefont {A.}~\bibnamefont
  {Tamii}}, \bibinfo {author} {\bibfnamefont {T.}~\bibnamefont {Uesaka}},\ and\
  \bibinfo {author} {\bibfnamefont {K.}~\bibnamefont {Yako}},\ }\bibfield
  {title} {\bibinfo {title} {Spin correlations of strongly interacting massive
  fermion pairs as a test of {B}ell's inequality},\ }\href
  {https://doi.org/10.1103/PhysRevLett.97.150405} {\bibfield  {journal}
  {\bibinfo  {journal} {Phys. Rev. Lett.}\ }\textbf {\bibinfo {volume} {97}},\
  \bibinfo {pages} {150405} (\bibinfo {year} {2006})}\BibitemShut {NoStop}%
\bibitem [{\citenamefont {Ansmann}\ \emph {et~al.}(2009)\citenamefont
  {Ansmann}, \citenamefont {Wang}, \citenamefont {Bialczak}, \citenamefont
  {Hofheinz}, \citenamefont {Lucero}, \citenamefont {Neeley}, \citenamefont
  {O'Connell}, \citenamefont {Sank}, \citenamefont {Weides}, \citenamefont
  {Wenner}, \citenamefont {Cleland},\ and\ \citenamefont
  {Martinis}}]{Ansmann:2009}%
  \BibitemOpen
  \bibfield  {author} {\bibinfo {author} {\bibfnamefont {M.}~\bibnamefont
  {Ansmann}}, \bibinfo {author} {\bibfnamefont {H.}~\bibnamefont {Wang}},
  \bibinfo {author} {\bibfnamefont {R.~C.}\ \bibnamefont {Bialczak}}, \bibinfo
  {author} {\bibfnamefont {M.}~\bibnamefont {Hofheinz}}, \bibinfo {author}
  {\bibfnamefont {E.}~\bibnamefont {Lucero}}, \bibinfo {author} {\bibfnamefont
  {M.}~\bibnamefont {Neeley}}, \bibinfo {author} {\bibfnamefont {A.~D.}\
  \bibnamefont {O'Connell}}, \bibinfo {author} {\bibfnamefont {D.}~\bibnamefont
  {Sank}}, \bibinfo {author} {\bibfnamefont {M.}~\bibnamefont {Weides}},
  \bibinfo {author} {\bibfnamefont {J.}~\bibnamefont {Wenner}}, \bibinfo
  {author} {\bibfnamefont {A.~N.}\ \bibnamefont {Cleland}},\ and\ \bibinfo
  {author} {\bibfnamefont {J.~M.}\ \bibnamefont {Martinis}},\ }\bibfield
  {title} {\bibinfo {title} {Violation of {B}ell's inequality in {J}osephson
  phase qubits},\ }\href {https://doi.org/10.1038/nature08363} {\bibfield
  {journal} {\bibinfo  {journal} {Nature}\ }\textbf {\bibinfo {volume} {461}},\
  \bibinfo {pages} {504} (\bibinfo {year} {2009})}\BibitemShut {NoStop}%
\bibitem [{\citenamefont {Giustina}\ \emph {et~al.}(2013)\citenamefont
  {Giustina}, \citenamefont {Mech}, \citenamefont {Ramelow}, \citenamefont
  {Wittmann}, \citenamefont {Kofler}, \citenamefont {Beyer}, \citenamefont
  {Lita}, \citenamefont {Calkins}, \citenamefont {Gerrits}, \citenamefont
  {Nam}, \citenamefont {Ursin},\ and\ \citenamefont
  {Zeilinger}}]{Giustina:2013}%
  \BibitemOpen
  \bibfield  {author} {\bibinfo {author} {\bibfnamefont {M.}~\bibnamefont
  {Giustina}}, \bibinfo {author} {\bibfnamefont {A.}~\bibnamefont {Mech}},
  \bibinfo {author} {\bibfnamefont {S.}~\bibnamefont {Ramelow}}, \bibinfo
  {author} {\bibfnamefont {B.}~\bibnamefont {Wittmann}}, \bibinfo {author}
  {\bibfnamefont {J.}~\bibnamefont {Kofler}}, \bibinfo {author} {\bibfnamefont
  {J.}~\bibnamefont {Beyer}}, \bibinfo {author} {\bibfnamefont
  {A.}~\bibnamefont {Lita}}, \bibinfo {author} {\bibfnamefont {B.}~\bibnamefont
  {Calkins}}, \bibinfo {author} {\bibfnamefont {T.}~\bibnamefont {Gerrits}},
  \bibinfo {author} {\bibfnamefont {S.~W.}\ \bibnamefont {Nam}}, \bibinfo
  {author} {\bibfnamefont {R.}~\bibnamefont {Ursin}},\ and\ \bibinfo {author}
  {\bibfnamefont {A.}~\bibnamefont {Zeilinger}},\ }\bibfield  {title} {\bibinfo
  {title} {Bell violation using entangled photons without the fair-sampling
  assumption},\ }\href {https://doi.org/10.1038/nature12012} {\bibfield
  {journal} {\bibinfo  {journal} {Nature}\ }\textbf {\bibinfo {volume} {497}},\
  \bibinfo {pages} {227} (\bibinfo {year} {2013})},\ \Eprint
  {https://arxiv.org/abs/1212.0533} {arXiv:1212.0533 [quant-ph]} \BibitemShut
  {NoStop}%
\bibitem [{\citenamefont {Giustina}\ \emph {et~al.}(2015)\citenamefont
  {Giustina}, \citenamefont {Versteegh}, \citenamefont {Wengerowsky},
  \citenamefont {Handsteiner}, \citenamefont {Hochrainer}, \citenamefont
  {Phelan}, \citenamefont {Steinlechner}, \citenamefont {Kofler}, \citenamefont
  {Larsson}, \citenamefont {Abell\'an}, \citenamefont {Amaya}, \citenamefont
  {Pruneri}, \citenamefont {Mitchell}, \citenamefont {Beyer}, \citenamefont
  {Gerrits}, \citenamefont {Lita}, \citenamefont {Shalm}, \citenamefont {Nam},
  \citenamefont {Scheidl}, \citenamefont {Ursin}, \citenamefont {Wittmann},\
  and\ \citenamefont {Zeilinger}}]{Giustina:2015}%
  \BibitemOpen
  \bibfield  {author} {\bibinfo {author} {\bibfnamefont {M.}~\bibnamefont
  {Giustina}}, \bibinfo {author} {\bibfnamefont {M.~A.~M.}\ \bibnamefont
  {Versteegh}}, \bibinfo {author} {\bibfnamefont {S.}~\bibnamefont
  {Wengerowsky}}, \bibinfo {author} {\bibfnamefont {J.}~\bibnamefont
  {Handsteiner}}, \bibinfo {author} {\bibfnamefont {A.}~\bibnamefont
  {Hochrainer}}, \bibinfo {author} {\bibfnamefont {K.}~\bibnamefont {Phelan}},
  \bibinfo {author} {\bibfnamefont {F.}~\bibnamefont {Steinlechner}}, \bibinfo
  {author} {\bibfnamefont {J.}~\bibnamefont {Kofler}}, \bibinfo {author}
  {\bibfnamefont {J.-A.}\ \bibnamefont {Larsson}}, \bibinfo {author}
  {\bibfnamefont {C.}~\bibnamefont {Abell\'an}}, \bibinfo {author}
  {\bibfnamefont {W.}~\bibnamefont {Amaya}}, \bibinfo {author} {\bibfnamefont
  {V.}~\bibnamefont {Pruneri}}, \bibinfo {author} {\bibfnamefont {M.~W.}\
  \bibnamefont {Mitchell}}, \bibinfo {author} {\bibfnamefont {J.}~\bibnamefont
  {Beyer}}, \bibinfo {author} {\bibfnamefont {T.}~\bibnamefont {Gerrits}},
  \bibinfo {author} {\bibfnamefont {A.~E.}\ \bibnamefont {Lita}}, \bibinfo
  {author} {\bibfnamefont {L.~K.}\ \bibnamefont {Shalm}}, \bibinfo {author}
  {\bibfnamefont {S.~W.}\ \bibnamefont {Nam}}, \bibinfo {author} {\bibfnamefont
  {T.}~\bibnamefont {Scheidl}}, \bibinfo {author} {\bibfnamefont
  {R.}~\bibnamefont {Ursin}}, \bibinfo {author} {\bibfnamefont
  {B.}~\bibnamefont {Wittmann}},\ and\ \bibinfo {author} {\bibfnamefont
  {A.}~\bibnamefont {Zeilinger}},\ }\bibfield  {title} {\bibinfo {title}
  {Significant-loophole-free test of bell's theorem with entangled photons},\
  }\href {https://doi.org/10.1103/PhysRevLett.115.250401} {\bibfield  {journal}
  {\bibinfo  {journal} {Phys. Rev. Lett.}\ }\textbf {\bibinfo {volume} {115}},\
  \bibinfo {pages} {250401} (\bibinfo {year} {2015})}\BibitemShut {NoStop}%
\bibitem [{\citenamefont {Hensen}\ \emph {et~al.}(2015)\citenamefont {Hensen},
  \citenamefont {Bernien}, \citenamefont {Dr{\'e}au}, \citenamefont {Reiserer},
  \citenamefont {Kalb}, \citenamefont {Blok}, \citenamefont {Ruitenberg},
  \citenamefont {Vermeulen}, \citenamefont {Schouten}, \citenamefont
  {Abell{\'a}n}, \citenamefont {Amaya}, \citenamefont {Pruneri}, \citenamefont
  {Mitchell}, \citenamefont {Markham}, \citenamefont {Twitchen}, \citenamefont
  {Elkouss}, \citenamefont {Wehner}, \citenamefont {Taminiau},\ and\
  \citenamefont {Hanson}}]{Hensen:2015}%
  \BibitemOpen
  \bibfield  {author} {\bibinfo {author} {\bibfnamefont {B.}~\bibnamefont
  {Hensen}}, \bibinfo {author} {\bibfnamefont {H.}~\bibnamefont {Bernien}},
  \bibinfo {author} {\bibfnamefont {A.~E.}\ \bibnamefont {Dr{\'e}au}}, \bibinfo
  {author} {\bibfnamefont {A.}~\bibnamefont {Reiserer}}, \bibinfo {author}
  {\bibfnamefont {N.}~\bibnamefont {Kalb}}, \bibinfo {author} {\bibfnamefont
  {M.~S.}\ \bibnamefont {Blok}}, \bibinfo {author} {\bibfnamefont
  {J.}~\bibnamefont {Ruitenberg}}, \bibinfo {author} {\bibfnamefont {R.~F.~L.}\
  \bibnamefont {Vermeulen}}, \bibinfo {author} {\bibfnamefont {R.~N.}\
  \bibnamefont {Schouten}}, \bibinfo {author} {\bibfnamefont {C.}~\bibnamefont
  {Abell{\'a}n}}, \bibinfo {author} {\bibfnamefont {W.}~\bibnamefont {Amaya}},
  \bibinfo {author} {\bibfnamefont {V.}~\bibnamefont {Pruneri}}, \bibinfo
  {author} {\bibfnamefont {M.~W.}\ \bibnamefont {Mitchell}}, \bibinfo {author}
  {\bibfnamefont {M.}~\bibnamefont {Markham}}, \bibinfo {author} {\bibfnamefont
  {D.~J.}\ \bibnamefont {Twitchen}}, \bibinfo {author} {\bibfnamefont
  {D.}~\bibnamefont {Elkouss}}, \bibinfo {author} {\bibfnamefont
  {S.}~\bibnamefont {Wehner}}, \bibinfo {author} {\bibfnamefont {T.~H.}\
  \bibnamefont {Taminiau}},\ and\ \bibinfo {author} {\bibfnamefont
  {R.}~\bibnamefont {Hanson}},\ }\bibfield  {title} {\bibinfo {title}
  {Loophole-free {B}ell inequality violation using electron spins separated by
  1.3 kilometres},\ }\href {https://doi.org/10.1038/nature15759} {\bibfield
  {journal} {\bibinfo  {journal} {Nature}\ }\textbf {\bibinfo {volume} {526}},\
  \bibinfo {pages} {682} (\bibinfo {year} {2015})},\ \Eprint
  {https://arxiv.org/abs/1508.05949} {arXiv:1508.05949 [quant-ph]} \BibitemShut
  {NoStop}%
\bibitem [{\citenamefont {Shalm}\ \emph {et~al.}(2015)\citenamefont {Shalm},
  \citenamefont {Meyer-Scott}, \citenamefont {Christensen}, \citenamefont
  {Bierhorst}, \citenamefont {Wayne}, \citenamefont {Stevens}, \citenamefont
  {Gerrits}, \citenamefont {Glancy}, \citenamefont {Hamel}, \citenamefont
  {Allman}, \citenamefont {Coakley}, \citenamefont {Dyer}, \citenamefont
  {Hodge}, \citenamefont {Lita}, \citenamefont {Verma}, \citenamefont
  {Lambrocco}, \citenamefont {Tortorici}, \citenamefont {Migdall},
  \citenamefont {Zhang}, \citenamefont {Kumor}, \citenamefont {Farr},
  \citenamefont {Marsili}, \citenamefont {Shaw}, \citenamefont {Stern},
  \citenamefont {Abell\'an}, \citenamefont {Amaya}, \citenamefont {Pruneri},
  \citenamefont {Jennewein}, \citenamefont {Mitchell}, \citenamefont {Kwiat},
  \citenamefont {Bienfang}, \citenamefont {Mirin}, \citenamefont {Knill},\ and\
  \citenamefont {Nam}}]{Shalm:2015}%
  \BibitemOpen
  \bibfield  {author} {\bibinfo {author} {\bibfnamefont {L.~K.}\ \bibnamefont
  {Shalm}}, \bibinfo {author} {\bibfnamefont {E.}~\bibnamefont {Meyer-Scott}},
  \bibinfo {author} {\bibfnamefont {B.~G.}\ \bibnamefont {Christensen}},
  \bibinfo {author} {\bibfnamefont {P.}~\bibnamefont {Bierhorst}}, \bibinfo
  {author} {\bibfnamefont {M.~A.}\ \bibnamefont {Wayne}}, \bibinfo {author}
  {\bibfnamefont {M.~J.}\ \bibnamefont {Stevens}}, \bibinfo {author}
  {\bibfnamefont {T.}~\bibnamefont {Gerrits}}, \bibinfo {author} {\bibfnamefont
  {S.}~\bibnamefont {Glancy}}, \bibinfo {author} {\bibfnamefont {D.~R.}\
  \bibnamefont {Hamel}}, \bibinfo {author} {\bibfnamefont {M.~S.}\ \bibnamefont
  {Allman}}, \bibinfo {author} {\bibfnamefont {K.~J.}\ \bibnamefont {Coakley}},
  \bibinfo {author} {\bibfnamefont {S.~D.}\ \bibnamefont {Dyer}}, \bibinfo
  {author} {\bibfnamefont {C.}~\bibnamefont {Hodge}}, \bibinfo {author}
  {\bibfnamefont {A.~E.}\ \bibnamefont {Lita}}, \bibinfo {author}
  {\bibfnamefont {V.~B.}\ \bibnamefont {Verma}}, \bibinfo {author}
  {\bibfnamefont {C.}~\bibnamefont {Lambrocco}}, \bibinfo {author}
  {\bibfnamefont {E.}~\bibnamefont {Tortorici}}, \bibinfo {author}
  {\bibfnamefont {A.~L.}\ \bibnamefont {Migdall}}, \bibinfo {author}
  {\bibfnamefont {Y.}~\bibnamefont {Zhang}}, \bibinfo {author} {\bibfnamefont
  {D.~R.}\ \bibnamefont {Kumor}}, \bibinfo {author} {\bibfnamefont {W.~H.}\
  \bibnamefont {Farr}}, \bibinfo {author} {\bibfnamefont {F.}~\bibnamefont
  {Marsili}}, \bibinfo {author} {\bibfnamefont {M.~D.}\ \bibnamefont {Shaw}},
  \bibinfo {author} {\bibfnamefont {J.~A.}\ \bibnamefont {Stern}}, \bibinfo
  {author} {\bibfnamefont {C.}~\bibnamefont {Abell\'an}}, \bibinfo {author}
  {\bibfnamefont {W.}~\bibnamefont {Amaya}}, \bibinfo {author} {\bibfnamefont
  {V.}~\bibnamefont {Pruneri}}, \bibinfo {author} {\bibfnamefont
  {T.}~\bibnamefont {Jennewein}}, \bibinfo {author} {\bibfnamefont {M.~W.}\
  \bibnamefont {Mitchell}}, \bibinfo {author} {\bibfnamefont {P.~G.}\
  \bibnamefont {Kwiat}}, \bibinfo {author} {\bibfnamefont {J.~C.}\ \bibnamefont
  {Bienfang}}, \bibinfo {author} {\bibfnamefont {R.~P.}\ \bibnamefont {Mirin}},
  \bibinfo {author} {\bibfnamefont {E.}~\bibnamefont {Knill}},\ and\ \bibinfo
  {author} {\bibfnamefont {S.~W.}\ \bibnamefont {Nam}},\ }\bibfield  {title}
  {\bibinfo {title} {Strong loophole-free test of local realism},\ }\href
  {https://doi.org/10.1103/PhysRevLett.115.250402} {\bibfield  {journal}
  {\bibinfo  {journal} {Phys. Rev. Lett.}\ }\textbf {\bibinfo {volume} {115}},\
  \bibinfo {pages} {250402} (\bibinfo {year} {2015})}\BibitemShut {NoStop}%
\bibitem [{\citenamefont {Dehollain}\ \emph {et~al.}(2016)\citenamefont
  {Dehollain}, \citenamefont {Simmons}, \citenamefont {Muhonen}, \citenamefont
  {Kalra}, \citenamefont {Laucht}, \citenamefont {Hudson}, \citenamefont
  {Itoh}, \citenamefont {Jamieson}, \citenamefont {McCallum}, \citenamefont
  {Dzurak},\ and\ \citenamefont {Morello}}]{Dehollain:2016}%
  \BibitemOpen
  \bibfield  {author} {\bibinfo {author} {\bibfnamefont {J.~P.}\ \bibnamefont
  {Dehollain}}, \bibinfo {author} {\bibfnamefont {S.}~\bibnamefont {Simmons}},
  \bibinfo {author} {\bibfnamefont {J.~T.}\ \bibnamefont {Muhonen}}, \bibinfo
  {author} {\bibfnamefont {R.}~\bibnamefont {Kalra}}, \bibinfo {author}
  {\bibfnamefont {A.}~\bibnamefont {Laucht}}, \bibinfo {author} {\bibfnamefont
  {F.}~\bibnamefont {Hudson}}, \bibinfo {author} {\bibfnamefont {K.~M.}\
  \bibnamefont {Itoh}}, \bibinfo {author} {\bibfnamefont {D.~N.}\ \bibnamefont
  {Jamieson}}, \bibinfo {author} {\bibfnamefont {J.~C.}\ \bibnamefont
  {McCallum}}, \bibinfo {author} {\bibfnamefont {A.~S.}\ \bibnamefont
  {Dzurak}},\ and\ \bibinfo {author} {\bibfnamefont {A.}~\bibnamefont
  {Morello}},\ }\bibfield  {title} {\bibinfo {title} {Bell's inequality
  violation with spins in silicon},\ }\href
  {https://doi.org/10.1038/nnano.2015.262} {\bibfield  {journal} {\bibinfo
  {journal} {Nat. Nanotechnol.}\ }\textbf {\bibinfo {volume} {11}},\ \bibinfo
  {pages} {242} (\bibinfo {year} {2016})},\ \Eprint
  {https://arxiv.org/abs/1504.03112} {arXiv:1504.03112 [cond-mat.mes-hall]}
  \BibitemShut {NoStop}%
\bibitem [{\citenamefont {Rosenfeld}\ \emph {et~al.}(2017)\citenamefont
  {Rosenfeld}, \citenamefont {Burchardt}, \citenamefont {Garthoff},
  \citenamefont {Redeker}, \citenamefont {Ortegel}, \citenamefont {Rau},\ and\
  \citenamefont {Weinfurter}}]{Rosenfeld:2017}%
  \BibitemOpen
  \bibfield  {author} {\bibinfo {author} {\bibfnamefont {W.}~\bibnamefont
  {Rosenfeld}}, \bibinfo {author} {\bibfnamefont {D.}~\bibnamefont
  {Burchardt}}, \bibinfo {author} {\bibfnamefont {R.}~\bibnamefont {Garthoff}},
  \bibinfo {author} {\bibfnamefont {K.}~\bibnamefont {Redeker}}, \bibinfo
  {author} {\bibfnamefont {N.}~\bibnamefont {Ortegel}}, \bibinfo {author}
  {\bibfnamefont {M.}~\bibnamefont {Rau}},\ and\ \bibinfo {author}
  {\bibfnamefont {H.}~\bibnamefont {Weinfurter}},\ }\bibfield  {title}
  {\bibinfo {title} {Event-ready {B}ell test using entangled atoms
  simultaneously closing detection and locality loopholes},\ }\href
  {https://doi.org/10.1103/PhysRevLett.119.010402} {\bibfield  {journal}
  {\bibinfo  {journal} {Phys. Rev. Lett.}\ }\textbf {\bibinfo {volume} {119}},\
  \bibinfo {pages} {010402} (\bibinfo {year} {2017})},\ \Eprint
  {https://arxiv.org/abs/1611.04604} {arXiv:1611.04604 [quant-ph]} \BibitemShut
  {NoStop}%
\bibitem [{\citenamefont {Rauch}\ \emph {et~al.}(2018)\citenamefont {Rauch},
  \citenamefont {Handsteiner}, \citenamefont {Hochrainer}, \citenamefont
  {Gallicchio}, \citenamefont {Friedman}, \citenamefont {Leung}, \citenamefont
  {Liu}, \citenamefont {Bulla}, \citenamefont {Ecker}, \citenamefont
  {Steinlechner}, \citenamefont {Ursin}, \citenamefont {Hu}, \citenamefont
  {Leon}, \citenamefont {Benn}, \citenamefont {Ghedina}, \citenamefont
  {Cecconi}, \citenamefont {Guth}, \citenamefont {Kaiser}, \citenamefont
  {Scheidl},\ and\ \citenamefont {Zeilinger}}]{Rauch:2018}%
  \BibitemOpen
  \bibfield  {author} {\bibinfo {author} {\bibfnamefont {D.}~\bibnamefont
  {Rauch}}, \bibinfo {author} {\bibfnamefont {J.}~\bibnamefont {Handsteiner}},
  \bibinfo {author} {\bibfnamefont {A.}~\bibnamefont {Hochrainer}}, \bibinfo
  {author} {\bibfnamefont {J.}~\bibnamefont {Gallicchio}}, \bibinfo {author}
  {\bibfnamefont {A.~S.}\ \bibnamefont {Friedman}}, \bibinfo {author}
  {\bibfnamefont {C.}~\bibnamefont {Leung}}, \bibinfo {author} {\bibfnamefont
  {B.}~\bibnamefont {Liu}}, \bibinfo {author} {\bibfnamefont {L.}~\bibnamefont
  {Bulla}}, \bibinfo {author} {\bibfnamefont {S.}~\bibnamefont {Ecker}},
  \bibinfo {author} {\bibfnamefont {F.}~\bibnamefont {Steinlechner}}, \bibinfo
  {author} {\bibfnamefont {R.}~\bibnamefont {Ursin}}, \bibinfo {author}
  {\bibfnamefont {B.}~\bibnamefont {Hu}}, \bibinfo {author} {\bibfnamefont
  {D.}~\bibnamefont {Leon}}, \bibinfo {author} {\bibfnamefont {C.}~\bibnamefont
  {Benn}}, \bibinfo {author} {\bibfnamefont {A.}~\bibnamefont {Ghedina}},
  \bibinfo {author} {\bibfnamefont {M.}~\bibnamefont {Cecconi}}, \bibinfo
  {author} {\bibfnamefont {A.~H.}\ \bibnamefont {Guth}}, \bibinfo {author}
  {\bibfnamefont {D.~I.}\ \bibnamefont {Kaiser}}, \bibinfo {author}
  {\bibfnamefont {T.}~\bibnamefont {Scheidl}},\ and\ \bibinfo {author}
  {\bibfnamefont {A.}~\bibnamefont {Zeilinger}},\ }\bibfield  {title} {\bibinfo
  {title} {Cosmic {B}ell test using random measurement settings from
  high-redshift quasars},\ }\href
  {https://doi.org/10.1103/PhysRevLett.121.080403} {\bibfield  {journal}
  {\bibinfo  {journal} {Phys. Rev. Lett.}\ }\textbf {\bibinfo {volume} {121}},\
  \bibinfo {pages} {080403} (\bibinfo {year} {2018})},\ \Eprint
  {https://arxiv.org/abs/1808.05966} {arXiv:1808.05966 [quant-ph]} \BibitemShut
  {NoStop}%
\bibitem [{\citenamefont {Bell}(1964)}]{Bell:1964}%
  \BibitemOpen
  \bibfield  {author} {\bibinfo {author} {\bibfnamefont {J.~S.}\ \bibnamefont
  {Bell}},\ }\bibfield  {title} {\bibinfo {title} {On the {Einstein} {Podolsky}
  {R}osen paradox},\ }\href
  {https://doi.org/10.1103/PhysicsPhysiqueFizika.1.195} {\bibfield  {journal}
  {\bibinfo  {journal} {Physics}\ }\textbf {\bibinfo {volume} {1}},\ \bibinfo
  {pages} {195} (\bibinfo {year} {1964})}\BibitemShut {NoStop}%
\bibitem [{\citenamefont {Bohm}(1952{\natexlab{a}})}]{PhysRev.85.166}%
  \BibitemOpen
  \bibfield  {author} {\bibinfo {author} {\bibfnamefont {D.}~\bibnamefont
  {Bohm}},\ }\bibfield  {title} {\bibinfo {title} {A suggested interpretation
  of the quantum theory in terms of "hidden" variables. {I}},\ }\href
  {https://doi.org/10.1103/PhysRev.85.166} {\bibfield  {journal} {\bibinfo
  {journal} {Phys. Rev.}\ }\textbf {\bibinfo {volume} {85}},\ \bibinfo {pages}
  {166} (\bibinfo {year} {1952}{\natexlab{a}})}\BibitemShut {NoStop}%
\bibitem [{\citenamefont {Bohm}(1952{\natexlab{b}})}]{PhysRev.85.180}%
  \BibitemOpen
  \bibfield  {author} {\bibinfo {author} {\bibfnamefont {D.}~\bibnamefont
  {Bohm}},\ }\bibfield  {title} {\bibinfo {title} {A suggested interpretation
  of the quantum theory in terms of "hidden" variables. {II}},\ }\href
  {https://doi.org/10.1103/PhysRev.85.180} {\bibfield  {journal} {\bibinfo
  {journal} {Phys. Rev.}\ }\textbf {\bibinfo {volume} {85}},\ \bibinfo {pages}
  {180} (\bibinfo {year} {1952}{\natexlab{b}})}\BibitemShut {NoStop}%
\bibitem [{\citenamefont {Pirandola}\ \emph {et~al.}(2020)\citenamefont
  {Pirandola}, \citenamefont {Andersen}, \citenamefont {Banchi}, \citenamefont
  {Berta}, \citenamefont {Bunandar}, \citenamefont {Colbeck}, \citenamefont
  {Englund}, \citenamefont {Gehring}, \citenamefont {Lupo}, \citenamefont
  {Ottaviani}, \citenamefont {Pereira}, \citenamefont {Razavi}, \citenamefont
  {Shaari}, \citenamefont {Tomamichel}, \citenamefont {Usenko}, \citenamefont
  {Vallone}, \citenamefont {Villoresi},\ and\ \citenamefont
  {Wallden}}]{Pirandola:20}%
  \BibitemOpen
  \bibfield  {author} {\bibinfo {author} {\bibfnamefont {S.}~\bibnamefont
  {Pirandola}}, \bibinfo {author} {\bibfnamefont {U.~L.}\ \bibnamefont
  {Andersen}}, \bibinfo {author} {\bibfnamefont {L.}~\bibnamefont {Banchi}},
  \bibinfo {author} {\bibfnamefont {M.}~\bibnamefont {Berta}}, \bibinfo
  {author} {\bibfnamefont {D.}~\bibnamefont {Bunandar}}, \bibinfo {author}
  {\bibfnamefont {R.}~\bibnamefont {Colbeck}}, \bibinfo {author} {\bibfnamefont
  {D.}~\bibnamefont {Englund}}, \bibinfo {author} {\bibfnamefont
  {T.}~\bibnamefont {Gehring}}, \bibinfo {author} {\bibfnamefont
  {C.}~\bibnamefont {Lupo}}, \bibinfo {author} {\bibfnamefont {C.}~\bibnamefont
  {Ottaviani}}, \bibinfo {author} {\bibfnamefont {J.~L.}\ \bibnamefont
  {Pereira}}, \bibinfo {author} {\bibfnamefont {M.}~\bibnamefont {Razavi}},
  \bibinfo {author} {\bibfnamefont {J.~S.}\ \bibnamefont {Shaari}}, \bibinfo
  {author} {\bibfnamefont {M.}~\bibnamefont {Tomamichel}}, \bibinfo {author}
  {\bibfnamefont {V.~C.}\ \bibnamefont {Usenko}}, \bibinfo {author}
  {\bibfnamefont {G.}~\bibnamefont {Vallone}}, \bibinfo {author} {\bibfnamefont
  {P.}~\bibnamefont {Villoresi}},\ and\ \bibinfo {author} {\bibfnamefont
  {P.}~\bibnamefont {Wallden}},\ }\bibfield  {title} {\bibinfo {title}
  {Advances in quantum cryptography},\ }\href
  {https://doi.org/10.1364/AOP.361502} {\bibfield  {journal} {\bibinfo
  {journal} {Adv. Opt. Photon.}\ }\textbf {\bibinfo {volume} {12}},\ \bibinfo
  {pages} {1012} (\bibinfo {year} {2020})}\BibitemShut {NoStop}%
\bibitem [{\citenamefont {Herrero-Collantes}\ and\ \citenamefont
  {Garcia-Escartin}(2017)}]{RevModPhys.89.015004}%
  \BibitemOpen
  \bibfield  {author} {\bibinfo {author} {\bibfnamefont {M.}~\bibnamefont
  {Herrero-Collantes}}\ and\ \bibinfo {author} {\bibfnamefont {J.~C.}\
  \bibnamefont {Garcia-Escartin}},\ }\bibfield  {title} {\bibinfo {title}
  {Quantum random number generators},\ }\href
  {https://doi.org/10.1103/RevModPhys.89.015004} {\bibfield  {journal}
  {\bibinfo  {journal} {Rev. Mod. Phys.}\ }\textbf {\bibinfo {volume} {89}},\
  \bibinfo {pages} {015004} (\bibinfo {year} {2017})}\BibitemShut {NoStop}%
\bibitem [{\citenamefont {Gisin}(2014)}]{10.5555/2666124}%
  \BibitemOpen
  \bibfield  {author} {\bibinfo {author} {\bibfnamefont {N.}~\bibnamefont
  {Gisin}},\ }\href@noop {} {\emph {\bibinfo {title} {Quantum Chance:
  Nonlocality, Teleportation and Other Quantum Marvels}}}\ (\bibinfo
  {publisher} {Copernicus},\ \bibinfo {address} {USA},\ \bibinfo {year}
  {2014})\BibitemShut {NoStop}%
\bibitem [{\citenamefont {Calude}\ \emph {et~al.}(2010)\citenamefont {Calude},
  \citenamefont {Dinneen}, \citenamefont {Dumitrescu},\ and\ \citenamefont
  {Svozil}}]{PhysRevA.82.022102}%
  \BibitemOpen
  \bibfield  {author} {\bibinfo {author} {\bibfnamefont {C.~S.}\ \bibnamefont
  {Calude}}, \bibinfo {author} {\bibfnamefont {M.~J.}\ \bibnamefont {Dinneen}},
  \bibinfo {author} {\bibfnamefont {M.}~\bibnamefont {Dumitrescu}},\ and\
  \bibinfo {author} {\bibfnamefont {K.}~\bibnamefont {Svozil}},\ }\bibfield
  {title} {\bibinfo {title} {Experimental evidence of quantum randomness
  incomputability},\ }\href {https://doi.org/10.1103/PhysRevA.82.022102}
  {\bibfield  {journal} {\bibinfo  {journal} {Phys. Rev. A}\ }\textbf {\bibinfo
  {volume} {82}},\ \bibinfo {pages} {022102} (\bibinfo {year}
  {2010})}\BibitemShut {NoStop}%
\bibitem [{\citenamefont {Solis}\ \emph {et~al.}(2015)\citenamefont {Solis},
  \citenamefont {Martinez}, \citenamefont {Alarcon}, \citenamefont {Ramirez},
  \citenamefont {U'Ren},\ and\ \citenamefont {Hirsch}}]{Solis_2015}%
  \BibitemOpen
  \bibfield  {author} {\bibinfo {author} {\bibfnamefont {A.}~\bibnamefont
  {Solis}}, \bibinfo {author} {\bibfnamefont {A.~M.~A.}\ \bibnamefont
  {Martinez}}, \bibinfo {author} {\bibfnamefont {R.~R.}\ \bibnamefont
  {Alarcon}}, \bibinfo {author} {\bibfnamefont {H.~C.}\ \bibnamefont
  {Ramirez}}, \bibinfo {author} {\bibfnamefont {A.~B.}\ \bibnamefont {U'Ren}},\
  and\ \bibinfo {author} {\bibfnamefont {J.~G.}\ \bibnamefont {Hirsch}},\
  }\bibfield  {title} {\bibinfo {title} {How random are random numbers
  generated using photons?},\ }\href
  {https://doi.org/10.1088/0031-8949/90/7/074034} {\bibfield  {journal}
  {\bibinfo  {journal} {Physica Scripta}\ }\textbf {\bibinfo {volume} {90}},\
  \bibinfo {pages} {074034} (\bibinfo {year} {2015})}\BibitemShut {NoStop}%
\bibitem [{\citenamefont {Martinez}\ \emph {et~al.}(2018)\citenamefont
  {Martinez}, \citenamefont {Solis}, \citenamefont {Diaz Hernandez~Rojas},
  \citenamefont {U'Ren}, \citenamefont {Hirsch},\ and\ \citenamefont
  {Perez~Castillo}}]{e20110886}%
  \BibitemOpen
  \bibfield  {author} {\bibinfo {author} {\bibfnamefont {A.~C.}\ \bibnamefont
  {Martinez}}, \bibinfo {author} {\bibfnamefont {A.}~\bibnamefont {Solis}},
  \bibinfo {author} {\bibfnamefont {R.}~\bibnamefont {Diaz Hernandez~Rojas}},
  \bibinfo {author} {\bibfnamefont {A.~B.}\ \bibnamefont {U'Ren}}, \bibinfo
  {author} {\bibfnamefont {J.~G.}\ \bibnamefont {Hirsch}},\ and\ \bibinfo
  {author} {\bibfnamefont {I.}~\bibnamefont {Perez~Castillo}},\ }\bibfield
  {title} {\bibinfo {title} {Advanced statistical testing of quantum random
  number generators},\ }\href {https://doi.org/10.3390/e20110886} {\bibfield
  {journal} {\bibinfo  {journal} {Entropy}\ }\textbf {\bibinfo {volume} {20}},\
  \bibinfo {pages} {886} (\bibinfo {year} {2018})}\BibitemShut {NoStop}%
\bibitem [{\citenamefont {Kovalsky}\ \emph {et~al.}(2018)\citenamefont
  {Kovalsky}, \citenamefont {Hnilo},\ and\ \citenamefont
  {Ag\"uero}}]{PhysRevA.98.042131}%
  \BibitemOpen
  \bibfield  {author} {\bibinfo {author} {\bibfnamefont {M.~G.}\ \bibnamefont
  {Kovalsky}}, \bibinfo {author} {\bibfnamefont {A.~A.}\ \bibnamefont
  {Hnilo}},\ and\ \bibinfo {author} {\bibfnamefont {M.~B.}\ \bibnamefont
  {Ag\"uero}},\ }\bibfield  {title} {\bibinfo {title} {Kolmogorov complexity of
  sequences of random numbers generated in {B}ell's experiments},\ }\href
  {https://doi.org/10.1103/PhysRevA.98.042131} {\bibfield  {journal} {\bibinfo
  {journal} {Phys. Rev. A}\ }\textbf {\bibinfo {volume} {98}},\ \bibinfo
  {pages} {042131} (\bibinfo {year} {2018})}\BibitemShut {NoStop}%
\bibitem [{\citenamefont {Abbott}\ \emph {et~al.}(2019)\citenamefont {Abbott},
  \citenamefont {Calude}, \citenamefont {Dinneen},\ and\ \citenamefont
  {Huang}}]{Abbott_2019}%
  \BibitemOpen
  \bibfield  {author} {\bibinfo {author} {\bibfnamefont {A.~A.}\ \bibnamefont
  {Abbott}}, \bibinfo {author} {\bibfnamefont {C.~S.}\ \bibnamefont {Calude}},
  \bibinfo {author} {\bibfnamefont {M.~J.}\ \bibnamefont {Dinneen}},\ and\
  \bibinfo {author} {\bibfnamefont {N.}~\bibnamefont {Huang}},\ }\bibfield
  {title} {\bibinfo {title} {Experimentally probing the algorithmic randomness
  and incomputability of quantum randomness},\ }\href
  {https://doi.org/10.1088/1402-4896/aaf36a} {\bibfield  {journal} {\bibinfo
  {journal} {Physica Scripta}\ }\textbf {\bibinfo {volume} {94}},\ \bibinfo
  {pages} {045103} (\bibinfo {year} {2019})}\BibitemShut {NoStop}%
\bibitem [{\citenamefont {Cover}\ and\ \citenamefont
  {Thomas}(2006)}]{Cover2006}%
  \BibitemOpen
  \bibfield  {author} {\bibinfo {author} {\bibfnamefont {T.~M.}\ \bibnamefont
  {Cover}}\ and\ \bibinfo {author} {\bibfnamefont {J.~A.}\ \bibnamefont
  {Thomas}},\ }\href@noop {} {\emph {\bibinfo {title} {Elements of Information
  Theory 2nd Edition (Wiley Series in Telecommunications and Signal
  Processing)}}}\ (\bibinfo  {publisher} {Wiley-Interscience},\ \bibinfo {year}
  {2006})\BibitemShut {NoStop}%
\bibitem [{\citenamefont {Lempel}\ and\ \citenamefont {Ziv}(1976)}]{1055501}%
  \BibitemOpen
  \bibfield  {author} {\bibinfo {author} {\bibfnamefont {A.}~\bibnamefont
  {Lempel}}\ and\ \bibinfo {author} {\bibfnamefont {J.}~\bibnamefont {Ziv}},\
  }\bibfield  {title} {\bibinfo {title} {On the complexity of finite
  sequences},\ }\href {https://doi.org/10.1109/TIT.1976.1055501} {\bibfield
  {journal} {\bibinfo  {journal} {IEEE Transactions on Information Theory}\
  }\textbf {\bibinfo {volume} {22}},\ \bibinfo {pages} {75} (\bibinfo {year}
  {1976})}\BibitemShut {NoStop}%
\bibitem [{\citenamefont {Kaspar}\ and\ \citenamefont
  {Schuster}(1987)}]{PhysRevA.36.842}%
  \BibitemOpen
  \bibfield  {author} {\bibinfo {author} {\bibfnamefont {F.}~\bibnamefont
  {Kaspar}}\ and\ \bibinfo {author} {\bibfnamefont {H.~G.}\ \bibnamefont
  {Schuster}},\ }\bibfield  {title} {\bibinfo {title} {Easily calculable
  measure for the complexity of spatiotemporal patterns},\ }\href
  {https://doi.org/10.1103/PhysRevA.36.842} {\bibfield  {journal} {\bibinfo
  {journal} {Phys. Rev. A}\ }\textbf {\bibinfo {volume} {36}},\ \bibinfo
  {pages} {842} (\bibinfo {year} {1987})}\BibitemShut {NoStop}%
\bibitem [{\citenamefont {Estevez-Rams}\ \emph {et~al.}(2013)\citenamefont
  {Estevez-Rams}, \citenamefont {Lora~Serrano}, \citenamefont
  {Arag{\'o}n~Fern{\'a}ndez},\ and\ \citenamefont
  {Brito~Reyes}}]{10.1063/1.4808251}%
  \BibitemOpen
  \bibfield  {author} {\bibinfo {author} {\bibfnamefont {E.}~\bibnamefont
  {Estevez-Rams}}, \bibinfo {author} {\bibfnamefont {R.}~\bibnamefont
  {Lora~Serrano}}, \bibinfo {author} {\bibfnamefont {B.}~\bibnamefont
  {Arag{\'o}n~Fern{\'a}ndez}},\ and\ \bibinfo {author} {\bibfnamefont
  {I.}~\bibnamefont {Brito~Reyes}},\ }\bibfield  {title} {\bibinfo {title} {{On
  the non-randomness of maximum {Lempel Ziv} complexity sequences of finite
  size}},\ }\href {https://doi.org/10.1063/1.4808251} {\bibfield  {journal}
  {\bibinfo  {journal} {Chaos: An Interdisciplinary Journal of Nonlinear
  Science}\ }\textbf {\bibinfo {volume} {23}},\ \bibinfo {pages} {023118}
  (\bibinfo {year} {2013})},\ \Eprint
  {https://arxiv.org/abs/https://pubs.aip.org/aip/cha/article-pdf/doi/10.1063/1.4808251/14604355/023118\_1\_online.pdf}
  {https://pubs.aip.org/aip/cha/article-pdf/doi/10.1063/1.4808251/14604355/023118\_1\_online.pdf}
  \BibitemShut {NoStop}%
\bibitem [{\citenamefont {Calude}(2002)}]{calude2002}%
  \BibitemOpen
  \bibfield  {author} {\bibinfo {author} {\bibfnamefont {C.~S.}\ \bibnamefont
  {Calude}},\ }\href@noop {} {\emph {\bibinfo {title} {{Information and
  Randomness: An Algorithmic Perspective}}}}\ (\bibinfo  {publisher} {Springer
  Berlin, Heidelberg},\ \bibinfo {year} {2002})\BibitemShut {NoStop}%
\bibitem [{\citenamefont {Weihs}\ \emph
  {et~al.}(1998{\natexlab{b}})\citenamefont {Weihs}, \citenamefont {Jennewein},
  \citenamefont {Simon}, \citenamefont {Weinfurter},\ and\ \citenamefont
  {Zeilinger}}]{weihs_gregor_1998_7185335}%
  \BibitemOpen
  \bibfield  {author} {\bibinfo {author} {\bibfnamefont {G.}~\bibnamefont
  {Weihs}}, \bibinfo {author} {\bibfnamefont {T.}~\bibnamefont {Jennewein}},
  \bibinfo {author} {\bibfnamefont {C.}~\bibnamefont {Simon}}, \bibinfo
  {author} {\bibfnamefont {H.}~\bibnamefont {Weinfurter}},\ and\ \bibinfo
  {author} {\bibfnamefont {A.}~\bibnamefont {Zeilinger}},\ }\bibfield  {title}
  {\bibinfo {title} {{Violation of Bell's Inequality under Strict Einstein
  Locality Conditions}},\ }\href {https://doi.org/10.5281/zenodo.7185335}
  {10.5281/zenodo.7185335} (\bibinfo {year} {1998}{\natexlab{b}})\BibitemShut
  {NoStop}%
\bibitem [{\citenamefont {Katz}\ and\ \citenamefont
  {Lindell}(2020)}]{books/crc/KatzLindell2020}%
  \BibitemOpen
  \bibfield  {author} {\bibinfo {author} {\bibfnamefont {J.}~\bibnamefont
  {Katz}}\ and\ \bibinfo {author} {\bibfnamefont {Y.}~\bibnamefont {Lindell}},\
  }\href {https://doi.org/10.1201/9781351133036} {\emph {\bibinfo {title}
  {Introduction to Modern Cryptography (3rd ed.)}}}\ (\bibinfo  {publisher}
  {Chapman and Hall/CRC Press},\ \bibinfo {year} {2020})\BibitemShut {NoStop}%
\bibitem [{\citenamefont {Goldreich}(2001)}]{goldreich_2001}%
  \BibitemOpen
  \bibfield  {author} {\bibinfo {author} {\bibfnamefont {O.}~\bibnamefont
  {Goldreich}},\ }\href {https://doi.org/10.1017/CBO9780511546891} {\emph
  {\bibinfo {title} {Foundations of Cryptography}}},\ Vol.~\bibinfo {volume}
  {1}\ (\bibinfo  {publisher} {Cambridge University Press},\ \bibinfo {year}
  {2001})\BibitemShut {NoStop}%
\bibitem [{\citenamefont {Jozsa}\ and\ \citenamefont
  {Linden}(2003)}]{doi:10.1098/rspa.2002.1097}%
  \BibitemOpen
  \bibfield  {author} {\bibinfo {author} {\bibfnamefont {R.}~\bibnamefont
  {Jozsa}}\ and\ \bibinfo {author} {\bibfnamefont {N.}~\bibnamefont {Linden}},\
  }\bibfield  {title} {\bibinfo {title} {On the role of entanglement in
  quantum-computational speed-up},\ }\href
  {https://doi.org/10.1098/rspa.2002.1097} {\bibfield  {journal} {\bibinfo
  {journal} {Proceedings of the Royal Society of London. Series A:
  Mathematical, Physical and Engineering Sciences}\ }\textbf {\bibinfo {volume}
  {459}},\ \bibinfo {pages} {2011} (\bibinfo {year} {2003})},\ \Eprint
  {https://arxiv.org/abs/https://royalsocietypublishing.org/doi/pdf/10.1098/rspa.2002.1097}
  {https://royalsocietypublishing.org/doi/pdf/10.1098/rspa.2002.1097}
  \BibitemShut {NoStop}%
\bibitem [{\citenamefont {Sipser}(2013)}]{sipser13}%
  \BibitemOpen
  \bibfield  {author} {\bibinfo {author} {\bibfnamefont {M.}~\bibnamefont
  {Sipser}},\ }\href@noop {} {\emph {\bibinfo {title} {Introduction to the
  Theory of Computation}}},\ \bibinfo {edition} {3rd}\ ed.\ (\bibinfo
  {publisher} {Course Technology},\ \bibinfo {address} {Boston, MA},\ \bibinfo
  {year} {2013})\BibitemShut {NoStop}%
\bibitem [{\citenamefont {Nir}\ and\ \citenamefont {Langley}(2018)}]{rfc8439}%
  \BibitemOpen
  \bibfield  {author} {\bibinfo {author} {\bibfnamefont {Y.}~\bibnamefont
  {Nir}}\ and\ \bibinfo {author} {\bibfnamefont {A.}~\bibnamefont {Langley}},\
  }\href {https://doi.org/10.17487/RFC8439} {\bibinfo {title} {{ChaCha20 and
  Poly1305 for IETF Protocols}}},\ \bibinfo {howpublished} {RFC 8439} (\bibinfo
  {year} {2018})\BibitemShut {NoStop}%
\bibitem [{\citenamefont {Bassham}\ \emph {et~al.}(2010)\citenamefont
  {Bassham}, \citenamefont {Rukhin}, \citenamefont {Soto}, \citenamefont
  {Nechvatal}, \citenamefont {Smid}, \citenamefont {Leigh}, \citenamefont
  {Levenson}, \citenamefont {Vangel}, \citenamefont {Heckert},\ and\
  \citenamefont {Banks}}]{8966}%
  \BibitemOpen
  \bibfield  {author} {\bibinfo {author} {\bibfnamefont {L.}~\bibnamefont
  {Bassham}}, \bibinfo {author} {\bibfnamefont {A.}~\bibnamefont {Rukhin}},
  \bibinfo {author} {\bibfnamefont {J.}~\bibnamefont {Soto}}, \bibinfo {author}
  {\bibfnamefont {J.}~\bibnamefont {Nechvatal}}, \bibinfo {author}
  {\bibfnamefont {M.}~\bibnamefont {Smid}}, \bibinfo {author} {\bibfnamefont
  {S.}~\bibnamefont {Leigh}}, \bibinfo {author} {\bibfnamefont
  {M.}~\bibnamefont {Levenson}}, \bibinfo {author} {\bibfnamefont
  {M.}~\bibnamefont {Vangel}}, \bibinfo {author} {\bibfnamefont
  {N.}~\bibnamefont {Heckert}},\ and\ \bibinfo {author} {\bibfnamefont
  {D.}~\bibnamefont {Banks}},\ }\href
  {https://tsapps.nist.gov/publication/get_pdf.cfm?pub_id=906762} {\bibinfo
  {title} {A statistical test suite for random and pseudorandom number
  generators for cryptographic applications}} (\bibinfo {year}
  {2010})\BibitemShut {NoStop}%
\bibitem [{\citenamefont {Clauser}\ \emph {et~al.}(1969)\citenamefont
  {Clauser}, \citenamefont {Horne}, \citenamefont {Shimony},\ and\
  \citenamefont {Holt}}]{PhysRevLett.23.880}%
  \BibitemOpen
  \bibfield  {author} {\bibinfo {author} {\bibfnamefont {J.~F.}\ \bibnamefont
  {Clauser}}, \bibinfo {author} {\bibfnamefont {M.~A.}\ \bibnamefont {Horne}},
  \bibinfo {author} {\bibfnamefont {A.}~\bibnamefont {Shimony}},\ and\ \bibinfo
  {author} {\bibfnamefont {R.~A.}\ \bibnamefont {Holt}},\ }\bibfield  {title}
  {\bibinfo {title} {Proposed experiment to test local hidden-variable
  theories},\ }\href {https://doi.org/10.1103/PhysRevLett.23.880} {\bibfield
  {journal} {\bibinfo  {journal} {Phys. Rev. Lett.}\ }\textbf {\bibinfo
  {volume} {23}},\ \bibinfo {pages} {880} (\bibinfo {year} {1969})}\BibitemShut
  {NoStop}%
\bibitem [{\citenamefont {Cirel'son}(1980)}]{Cirel'son1980}%
  \BibitemOpen
  \bibfield  {author} {\bibinfo {author} {\bibfnamefont {B.~S.}\ \bibnamefont
  {Cirel'son}},\ }\bibfield  {title} {\bibinfo {title} {{Quantum
  generalizations of Bell's inequality}},\ }\href
  {https://doi.org/10.1007/BF00417500} {\bibfield  {journal} {\bibinfo
  {journal} {Letters in Mathematical Physics}\ }\textbf {\bibinfo {volume}
  {4}},\ \bibinfo {pages} {93} (\bibinfo {year} {1980})}\BibitemShut {NoStop}%
\bibitem [{\citenamefont {Cohen}(1988)}]{cohen1988spa}%
  \BibitemOpen
  \bibfield  {author} {\bibinfo {author} {\bibfnamefont {J.}~\bibnamefont
  {Cohen}},\ }\href@noop {} {\emph {\bibinfo {title} {{Statistical Power
  Analysis for the Behavioral Sciences}}}}\ (\bibinfo  {publisher} {Lawrence
  Erlbaum Associates},\ \bibinfo {year} {1988})\BibitemShut {NoStop}%
\bibitem [{\citenamefont {Moinester}\ and\ \citenamefont
  {Gottfried}(2014)}]{TQMP10-2-124}%
  \BibitemOpen
  \bibfield  {author} {\bibinfo {author} {\bibfnamefont {M.}~\bibnamefont
  {Moinester}}\ and\ \bibinfo {author} {\bibfnamefont {R.}~\bibnamefont
  {Gottfried}},\ }\bibfield  {title} {\bibinfo {title} {Sample size estimation
  for correlations with pre-specified confidence interval},\ }\href
  {https://doi.org/10.20982/tqmp.10.2.p124} {\bibfield  {journal} {\bibinfo
  {journal} {The Quantitative Methods for Psychology}\ }\textbf {\bibinfo
  {volume} {10}},\ \bibinfo {pages} {124} (\bibinfo {year} {2014})}\BibitemShut
  {NoStop}%
\bibitem [{\citenamefont {Nielsen}\ and\ \citenamefont
  {Chuang}(2010)}]{nielsen_chuang_2010}%
  \BibitemOpen
  \bibfield  {author} {\bibinfo {author} {\bibfnamefont {M.~A.}\ \bibnamefont
  {Nielsen}}\ and\ \bibinfo {author} {\bibfnamefont {I.~L.}\ \bibnamefont
  {Chuang}},\ }\href {https://doi.org/10.1017/CBO9780511976667} {\emph
  {\bibinfo {title} {Quantum Computation and Quantum Information: 10th
  Anniversary Edition}}}\ (\bibinfo  {publisher} {Cambridge University Press},\
  \bibinfo {year} {2010})\BibitemShut {NoStop}%
\end{thebibliography}%

%

\end{document}